\def\qu#1 {\fbox {\footnote {\ }}\ \footnotetext { From Qu: {\color{red}#1}}}
\def\kq#1 {\fbox {\footnote {\ }}\ \footnotetext { From Quan: {\color{blue}#1}}}
\def\tl#1 {\fbox {\footnote {\ }}\ \footnotetext { From Niu: {\color{blue}#1}}}
\def\wang#1 {\fbox {\footnote {\ }}\ \footnotetext { From Wang: {\color{purple}#1}}}
\newcommand{\mkq}[1]{{{\color{blue}#1}}}
\newtheorem{Th}{Theorem}[section]
\newtheorem{Cor}[Th]{Corollary}
\newtheorem{Lem}[Th]{Lemma}
\newtheorem{example}[Th]{Example}
\newcommand{\tr}{{\rm Tr}}
\newcommand{\im}{{\rm Im}}
\newcommand{\gf}{{\mathbb F}}
\newcommand{\fqn}{\mathbb{F}_{q^n}}
\newcommand{\fq}{{\mathbb F}_q}
\newcommand{\figcaption}{\def\@captype{figure}\caption}
\newcommand{\tabcaption}{\def\@captype{table}\caption}
\begin{document}

	\title{Finding compositional inverses of permutations from the AGW criterion}

		\author{
			{Tailin Niu, Kangquan Li, Longjiang Qu and Qiang Wang}
			\thanks{Tailin Niu, Kangquan Li and Longjiang Qu are with the College of Liberal Arts and Sciences, National University of Defense Technology, Changsha, 410073, China.
				Longjiang Qu is also with the State Key Laboratory of Cryptology, Beijing, 100878, China.
				Qiang Wang is with School of Mathematics and Statistics, Carleton University, 1125 Colonel By Drive, Ottawa, Ontario, K1S 5B6, Canada.
				
				The research of Longjiang Qu is partially supported by the National Key Research and Development Program of China under 2017YFB0802000,
				the Nature Science Foundation of China (NSFC) under Grant 61722213, 11771451, 62032009, and the Open Foundation of State Key Laboratory of Cryptology.
				The research of Qiang Wang is partially supported by NSERC of Canada.

				E-mail:
				Tailin Niu: runningniu@outlook.com,
				Kangquan Li: likangquan11@nudt.edu.cn,
				Longjiang Qu: ljqu\_happy@hotmail.com,
				Qiang Wang: wang@math.carleton.ca.	
			}
		}

	\maketitle{}
	\begin{abstract}		
		Permutation polynomials and their compositional inverses have wide applications in cryptography, coding theory, and combinatorial designs.
		Motivated by several previous results on finding compositional inverses of permutation polynomials of different forms, we propose a general method for finding these inverses of permutation polynomials constructed by the AGW criterion.
As a result, we have reduced the problem of finding the compositional inverse of such a permutation polynomial over a finite field to that of finding the inverse of a bijection over a smaller set.
	We demonstrate our method by interpreting several recent known results, as well as by providing new explicit results on more classes of permutation polynomials in different types.
		In addition, we give new criteria for these permutation polynomials being involutions.
		Explicit constructions are also provided for all involutory criteria.

	\end{abstract}
	
	\begin{IEEEkeywords}
		Finite Fields, Permutation Polynomials, AGW Criterion, Compositional Inverses, Involutions
	\end{IEEEkeywords}
	
	\section{Introduction}
	Let  $q$ be a prime power and $\gf_q$ be the finite field with $q$ elements.
	We call a polynomial $f(x) \in\gf_q[x]$ a \textit{permutation polynomial} (PP for short) when the evaluation map $f: a \mapsto f(a) $ is a bijection.
	The unique polynomial $f^{-1}(x)$ over $\gf_q$ such that $ f \circ f^{-1} = f^{-1} \circ f =I $ is called \textit{the compositional inverse} of $f(x)$, where $ I $ denotes the identity map.
	Furthermore,  if  a PP $f$ satisfies $ f \circ f =I $, then $f$ is called an \textit{involution}.
	{Throughout this paper, for the multiplicative inverse function $f(x) = x^{-1}$, we always define $f(0) = 0$.
}
	Because PPs play important roles in finite field theory and  they have broad applications in coding theory,  combinatorial designs, and  cryptography \cite{
		carlet1998codes,ding2015permutation,hou2015determination,ding2006family,dobbertin1999almostw, 	
		muller1981some,lidl1984permutation,dobbertin1999almostn,mcfarland1973family,ding2013cyclic,dempwolff2013permutation}, the construction of PPs over finite fields has attracted a lot of attention.
	For recent surveys on constructing PPs, we invite the interested readers to consult  \cite{hou2015permutation}, \cite[Section 5]{li2018survey}, \cite{MullenWang14} and \cite[Section 5]{Wang2019index}.
	Explicitly determining the compositional inverse of a PP is useful because  both a PP and its inverse are required in many applications.
	For example, during the decryption process in a cryptographic algorithm with SPN structure,  the compositional inverse of the S-box plays an essential role.
	{Moreover, explicitly determining the compositional inverse will advance the further research of involutions, which are particularly useful (as part of a block cipher) in devices with limited resources \cite{charpin2016involutions,zheng2019constructions}.
	For recent research of involutions and  permutations with small cycles, we refer the readers to \cite{charpin2016involutions,zheng2019constructions,niu2019new,wuCharacterizationsConstructionsTriplecycle2020a,chenConstructionsCyclePermutations2020}.
}



	{ 	In general, it is difficult  to obtain the explicit compositional inverse of a random PP, except for several well known classes of PPs such as linear  polynomial,  monomials \cite{KyureghyanS12,KyureghyanS14}, and Dickson polynomials \cite{li2019compositional,lidlDicksonPolynomials1993}.
 In recent years, compositional inverses  of  several classes of PPs of special forms  have been obtained in explicit or implicit forms;  see  \cite{wu2013compositional, lidl1997finite,wuLinearizedPolynomialsFinite2013,wuCompositionalInverseClass2013a,tuxanidy2014inverses,wu2014compositional,li2019compositional,wang2017note,zheng2019constructions, zheng2019inverses,coulter2002compositional,tuxanidy2017compositional,zheng2018inverseoflarge}  for more details.
	A short survey on this topic can be found in \cite{zheng2019inverses}.
	In 2011, Akbary et al.  proposed a useful method called the AGW criterion for constructing PPs \cite{akbary2011constructing}.
		For the sake of convenience, when a PP is constructed using the AGW criterion or it can be interpreted by the AGW criterion, we call it an  \textit{AGW-PP}.
		Many classes of AGW-PPs  have been constructed up to  today and they can be divided into three types: multiplicative type \cite{gupta2016some,li2017several,li2017new,zha2017further,cepak2017permutations,li2018newp,tu2018two}, additive type  \cite{zheng2019two,akbary2011constructing,laigle-chapuyNoteClassQuadratic2007,yuan2011permutation} and hybrid type \cite{akbary2011constructing,kyureghyan2011constructing, li2018new, zheng2016large}.
Despite of recent progress on finding  compositional inverses of several classes of AGW-PPs, e.g.  \cite{wu2013compositional,tuxanidy2014inverses,li2019compositional,niu2019new},  there are  many  other classes of AGW-PPs whose compositional inverses are still unknown.  This motivates us to explore a general method to find  compositional inverses of these AGW-PPs.}

{
Inspired by the recent work in \cite{wu2014compositional,wu2013compositional,tuxanidy2014inverses,li2019compositional,niu2019new}, we propose a  general framework to solve the compositional inverse of an arbitrary AGW-PP, say $f$ on a finite set $A$.
It is well known that  there are  two surjective mappings $\lambda, \bar{\lambda}$ from $A$ to finite sets $S$ and $\bar{S}$, respectively, and a bijection $g$ from $S$ to $\bar{S}$ satisfying    $\overline{\lambda} \circ f = g \circ \lambda$ (i.e., Fig. 1 is commutative).
\begin{figure}[H]  \label{agwmap}
	\begin{equation*}
	\xymatrix{
		A \ar[rr]^{f}\ar[d]_{\lambda} &   &  A  \ar[d]^{\overline{\lambda}} \\
		S	 \ar[rr]^{g} &  & \overline{S} }
	\end{equation*}
	\caption{the AGW criterion} %
\end{figure}
In order to find  the inverse of the given AGW-PP $f$, we can construct two other mappings $ \eta, \overline{\eta} $ such that $\overline{\phi} \circ f = \psi \circ \phi$, where  $\phi=(\lambda, \eta)$
and  $ \overline{\phi}=(\overline{\lambda}, \overline{\eta})$ are bijections from $A$ to two other sets $\phi(A)$ and $\overline{\phi}(A)$ respectively (see Fig. 2).
	Namely, we have  the following commutative diagram and then the compositional inverse of $f$ is expressed by $f^{-1} =  \phi^{-1} \circ \psi^{-1} \circ \overline{\phi}$.
\begin{figure}[H] \label{agwinversemap}
	\begin{equation*}
		\xymatrix{
			A \ar[rr]^{f}\ar[d]_{\phi=(\lambda, \eta)} &   &  A  \ar[d]^{\overline{\phi}=(\overline{\lambda}, \overline{\eta})} \\
			\phi(A)	 \ar[rr]^{\psi=(g,\tau)} &  & \overline{\phi}(A) }
	\end{equation*}
\caption{a framework  to obtain the inverse} %
\end{figure}

Generally speaking, there are three types of AGW-PPs (multiplicative, additive, and hybrid) which are  classified based on  the properties of $ \lambda $ and $\overline{\lambda}$.
The classes studied by Li et al. \cite{li2019compositional} 
belong to the multiplicative case, while the classes studied by Wu \cite{wu2014compositional,wu2013compositional} and Tuxanidy et al. \cite{tuxanidy2014inverses,tuxanidy2017compositional} belong to the additive case.
Our general framework interprets all these recent results and provides a recipe to find compositional inverses of  many other classes of AGW-PPs.
	The key point of our approach is to design ``suitable" mappings $\eta$ and $\bar{\eta}$ such that  both $\phi =(\lambda, \eta)$  and $\bar{\phi} = (\bar{\lambda}, \bar{\eta})$ are bijections, and $\tau$ can be computed easily.
	Moreover,  compositional inverses of  $\phi$ and 	 $\psi=(g, \tau)$ can be efficiently computed.
	To demonstrate our approach, we propose several new explicit choices of $\eta$ and $\bar{\eta}$ and use them to find the compositional inverses of four classes of AGW-PPs in different types.
}
 {
As a consequence, we have reduced the problem of finding compositional inverse of a permutation polynomial over a finite field to that of finding the inverse of a bijection over a smaller set (see for example,  Theorems~\ref{mainmul}, Theorem~\ref{mainYD11}, Theorem~\ref{mainBk},  Theorem~\ref{mainGGF}.})

The rest of this paper is organized as follows.
In Section \ref{general}, we  present this unified method to find compositional inverses of any AGW-PP and recall some known results of computing compositional inverses under our framework.
As applications, we explicitly solve the compositional inverses of another four classes of AGW-PPs.
These results, as well as the characterization of involutions,   are divided into multiplicative, additive and hybrid cases, which are
presented in Sections \ref{multiplicative}, \ref{additive} and \ref{combinatorial} respectively.

\section{The unified method}
\label{general}
In this section, we present our unified approach to finding the compositional inverses of AGW-PPs.
First of all, we recall the following AGW criterion.
\begin{Lem}
	\label{LGWlemma}
	(\cite[Lemma 1.2]{akbary2011constructing}, the AGW Criterion)
	Let $A, S$, and $\overline{S}$ be finite sets with $\# S=\# \overline{S}$, and let $f: A\to A,$ $g: S\to \overline{S}$, $\lambda: A\to S$ and $\overline{\lambda}: A\to\overline{S}$ be maps such that $\bar{\lambda}\circ f=g\circ \lambda$.
	If both $\lambda$ and $\bar{\lambda}$ are surjective, then the following statements are equivalent:
	\begin{enumerate}[(1)]
		\item $f$ is a bijection and
		\item $g$ is a bijection from $S$ to $\overline{S}$ and $f$ is injective on $\lambda^{-1}(s)$ for each $s\in S$.
	\end{enumerate}
\end{Lem}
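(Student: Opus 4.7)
The plan is to prove the two implications separately, each amounting to a short diagram chase that exploits the commutative relation $\overline{\lambda}\circ f = g\circ \lambda$ together with the finiteness hypothesis $\#S = \#\overline{S}$. The guiding observation throughout is that $f$ carries the fiber $\lambda^{-1}(s)$ into the fiber $\overline{\lambda}^{-1}(g(s))$, so $g$ records how $f$ permutes the blocks of the partition $\{\lambda^{-1}(s):s\in S\}$ while the fiberwise injectivity clause controls what happens inside each block.

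For the direction $(1)\Rightarrow(2)$, I would first dispatch the second clause: if $f$ is injective on all of $A$, then its restriction to any subset, and in particular to any fiber $\lambda^{-1}(s)$, is automatically injective. To show that $g$ is a bijection, since $S$ and $\overline{S}$ have the same finite cardinality it is enough to verify surjectivity. Given an arbitrary $\overline{s}\in\overline{S}$, surjectivity of $\overline{\lambda}$ produces some $b\in A$ with $\overline{\lambda}(b)=\overline{s}$, surjectivity of $f$ then produces some $a\in A$ with $f(a)=b$, and the commutative relation yields $g(\lambda(a))=\overline{\lambda}(f(a))=\overline{s}$, so $\overline{s}\in\operatorname{Image}(g)$.

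For the direction $(2)\Rightarrow(1)$, since $A$ is finite it suffices to prove that $f$ is injective. Suppose $f(a_1)=f(a_2)$. Applying $\overline{\lambda}$ to both sides and invoking the commutative relation gives $g(\lambda(a_1))=g(\lambda(a_2))$; injectivity of $g$ then forces $\lambda(a_1)=\lambda(a_2)=:s$, placing both $a_1$ and $a_2$ in the common fiber $\lambda^{-1}(s)$. The hypothesis that $f$ is injective on this fiber then yields $a_1=a_2$, finishing the argument.

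I do not anticipate a genuine technical obstacle in this proof; the only subtle point is to invoke the finiteness hypothesis $\#S=\#\overline{S}$ correctly so that surjectivity of $g$ can be upgraded to bijectivity. Alternatively, one could establish injectivity of $g$ directly from the commutative square: if $g(s_1)=g(s_2)$, lift $s_i$ to $a_i\in A$ via $\lambda$, note $\overline{\lambda}(f(a_1))=\overline{\lambda}(f(a_2))$, and then use $f$ injective (in the other direction of the equivalence) to conclude. Either approach keeps the proof to a compact diagram chase.
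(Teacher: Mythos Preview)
Your proof is correct. Note, however, that the paper does not actually supply its own proof of this lemma: it is simply recalled, with citation, as a known result from \cite{akbary2011constructing}. So there is no ``paper's proof'' to compare against. Your argument is the standard diagram chase and would serve perfectly well as a self-contained verification; the only minor remark is that in the $(1)\Rightarrow(2)$ direction you invoke surjectivity of $f$, which is legitimate since $A$ is assumed finite and $f$ is a bijection, but it might be worth making explicit that finiteness of $A$ is what guarantees $f$ is surjective (not merely injective).
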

The AGW criterion can be illustrated in the commutative diagram of Fig. 1.
{This criterion  is very useful to explain many earlier constructions and to construct new classes of PPs.  The key of AGW criterion lies in transforming the problem of constructing permutations $f$ of  a finite set $A$ into finding a  bijections $g$ from $S$ to $\bar{S}$, whose cardinalities are both smaller than  the size of $A$.
Since $\lambda$ and $\overline{\lambda}$ both are surjective (normally they are not bijective because we prefer the sizes of $S$ and $\bar{S}$ are smaller), one can not simply obtain the compositional inverse of $f$ using only  mappings $\lambda$, $\overline{\lambda}$ and $g$.

When $A= \fqn$ and $\lambda= \bar{\lambda}$ are additive, Tuxanidy and Wang \cite{tuxanidy2014inverses}, \cite{tuxanidy2017compositional} converted the problem of computing the inverse of a PP over $\gf_q$  into that of computing two inverses of two other bijections over two  subspaces (one of them is  $\lambda\left(\mathbb{F}_{q^{n}}\right)$)  respectively.
A key ingredient is to decompose $\fqn$ into two subspaces.
This generalized a result  of  Wu et al.  \cite{wu2013compositional} who focused on the case that $\lambda$ is the trace function.
When  $A=\fq$ and  $\lambda= \bar{\lambda}$ are monomials,  Li et al. \cite{li2019compositional} provided  a multiplicative analogue of \cite{tuxanidy2014inverses, tuxanidy2017compositional, wu2013compositional}.
The main idea of  Li et al. relies on transforming the problem of computing the compositional inverses of permutation polynomials of the form $x^rh(x^s)$  over $\gf_q$ into computing the compositional inverses of two restricted permutation mappings, where one of them is a monomial over  $\mathbb{F}_q$ and the other is the polynomial $x^rh(x)^s$ over a particular subgroup of $\mathbb{F}_q$  with order $(q-1)/s$.
A key ingredient is a bijection from $\gf_q$ to another set $F_{q, s} =\{ (x^{q-s}, x^s) : x \in \fq\}$ whose inverse can be easily computed.
A variant of this result can be found in  Niu et al. \cite{niu2019new}.

We note that the similar idea works for an arbitrary AGW-PP $f$. Namely, we can find  the compositional inverse of  an arbitrary AGW-PP $f$ over a set $A$ by  constructing two other mappings, i.e., $ \eta $ and $  \overline{\eta} $ such that
	\begin{enumerate}[(1)]
\item $ \phi=(\lambda, \eta), \overline{\phi}=(\overline{\lambda}, \overline{\eta})$ become bijective from $A$ to some subsets of $A\times A$ as shown by Fig. 2.
\item Fig. 2 is a commutative diagram, i.e., $\overline{\phi} \circ f = \psi \circ \phi$.
	\end{enumerate}
}
{Then, it is clear that  the compositional inverse of $f$ can be expressed as $f^{-1} =  \phi^{-1} \circ \psi^{-1} \circ \overline{\phi}$.
Here we do not need to decompose the finite field into subspaces, as previously done for the additive case, neither we have restrictions
on special types of AGW-PPs.  Instead, we emphasize that it is crucial  to find simple mappings $\eta$ and $\overline{\eta}$ so that the compositional inverse of $\phi =(\lambda, \eta)$  and the compositional inverse of $\psi = (g, \tau)$ can be computed easily.  }

{To summarize the above discussion, we have the following theorem. }
\begin{Th}
	\label{main}
	Let $ A $ be a finite set, $f: A\to A$,  and let $\phi=(\lambda, \eta)$
	and $\overline{\phi}=(\overline{\lambda}, \overline{\eta})$
	be two bijective mappings from $A$ to some subsets of $A \times A$,
	and  denote by $ {\phi }^{-1}$, $ {\bar{\phi}}^{-1} $ their compositional inverses respectively.
	Let $\psi=(g,\tau): \phi(A)\to \overline{\phi}(A)$ be a mapping such that $\bar{\phi}\circ f=\psi\circ \phi$.
	Then $ f $ is bijective if and only if $ \psi $ is bijective.
	Furthermore, if $ \psi $ is bijective and its compositional inverse is denoted by $ \psi^{-1} $, then
	$$f^{-1}=  {\phi }^{-1} \circ \psi^{-1} \circ {\bar{\phi}}$$
	is the compositional inverse of $f$ on $A$.
\end{Th}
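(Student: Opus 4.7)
The plan is to exploit the commutativity relation $\bar{\phi}\circ f = \psi \circ \phi$ together with the bijectivity of $\phi$ and $\bar{\phi}$ in order to rewrite $f$ as a three-fold composition of bijections; once that is done, both the equivalence and the explicit formula follow immediately.

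First I would use that $\bar{\phi}:A\to \bar{\phi}(A)$ is a bijection, so its compositional inverse $\bar{\phi}^{-1}:\bar{\phi}(A)\to A$ is well defined. Composing the hypothesis $\bar{\phi}\circ f = \psi\circ \phi$ on the left with $\bar{\phi}^{-1}$ yields
\[
f \;=\; \bar{\phi}^{-1}\circ \psi\circ \phi,
\]
an identity of maps $A\to A$ in which $\phi:A\to \phi(A)$ and $\bar{\phi}^{-1}:\bar{\phi}(A)\to A$ are already known to be bijective.

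Second, I would read off the equivalence. If $f$ is a bijection, then $\psi = \bar{\phi}\circ f\circ \phi^{-1}$ is a composition of bijections $\phi(A)\to A\to A\to \bar{\phi}(A)$, hence a bijection from $\phi(A)$ to $\bar{\phi}(A)$. Conversely, if $\psi$ is a bijection, the displayed identity expresses $f$ as a composition of bijections $A\to \phi(A)\to \bar{\phi}(A)\to A$, so $f$ is a bijection of $A$. For the inverse formula, assuming $\psi$ is bijective I would simply apply the standard rule $(h_1\circ h_2\circ h_3)^{-1}=h_3^{-1}\circ h_2^{-1}\circ h_1^{-1}$ to $f=\bar{\phi}^{-1}\circ\psi\circ\phi$, producing
\[
f^{-1} \;=\; \phi^{-1}\circ \psi^{-1}\circ \bar{\phi},
\]
which is exactly the claimed expression.

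There is no real obstacle; the only thing that requires attention is the domain/codomain bookkeeping. One must remember that $\psi$ is defined only as a map from $\phi(A)$ to $\bar{\phi}(A)$ (not on all of $A\times A$), so each composition above is well typed precisely because $\phi$ lands in the domain of $\psi$ and $\psi$ lands in the domain of $\bar{\phi}^{-1}$. Once this is observed, the argument is essentially a one-line diagram chase through Fig.~2.
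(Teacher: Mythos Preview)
Your proposal is correct and matches the paper's approach: the paper does not give a formal proof of this theorem but presents it as an immediate consequence of the commutative diagram in Fig.~2, remarking only that ``it is clear that the compositional inverse of $f$ can be expressed as $f^{-1}=\phi^{-1}\circ\psi^{-1}\circ\bar{\phi}$.'' Your argument simply makes this diagram chase explicit, with the same underlying idea of writing $f=\bar{\phi}^{-1}\circ\psi\circ\phi$ and inverting.
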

{We remark that the above theorem can be viewed as a special version of the AGW criterion, where the mappings $\lambda$ and $\bar{\lambda}$ are both bijections and the cardinalities of $S$ and $\bar{S}$ are both equal to that of $A$.
However, earlier constructions of AGW-PPs focused on bijections over smaller sets.
This result can be viewed as a new application of the AGW criterion in computing compositional inverses.

In fact, our method provides a possibility to solve the compositional inverses of all AGW-PPs and the process can be summarized as follows:
\begin{enumerate}[(1)]
	\item Design $\eta$ and $\overline{\eta}$ such that $\phi = (\lambda,\eta)$ and $\overline{\phi}=(\overline{\lambda},\overline{\eta})$ are both bijective, and compute the compositional inverse $\phi^{-1}$ of $\phi.$
	\item Compute the unique expression of $\psi$ such that $\psi\circ \phi =\overline{\phi}\circ f $.
	\item Compute the compositional inverses $\psi^{-1}$ of $\psi$.
	\item Obtain the compositional inverse $f^{-1}$ of $ f $ by Theorem \ref{main}, i.e., $ f^{-1}={\phi }^{-1} \circ \psi^{-1} \circ {\bar{\phi}} $.
\end{enumerate}

}

{In the following, we  demonstrate the explicit choices of $\phi, \bar{\phi}, \psi$  in several known results using  Theorem~\ref{main}.
In order to improve readability, we have adapted all the notations of these results in terms of  Lemma 2.1 and Theorem~\ref{main}. }

\begin{example}\cite[Theorem 2.3]{li2019compositional}\label{liExample}
	We take	
	\begin{enumerate}[(1)]
		\item $ A=\gf_q $;
		\item $f(x)=x^rh\left(x^s\right) $ permutes $ \gf_q  $, where $ s \mid (q-1), \gcd(r,q-1) =1 $ and $ h(0) \ne 0 $;
		\item $ \phi(x) =\overline{\phi}(x)= \left( x^{q-s}, x^s \right)$;
		\item $ \phi(A)=\overline{\phi}(A) $; 
		\item $ \psi(y,z)=\left(y^{r} h(z)^{q-s}, z^{r} h(z)^{s}\right) $; and
		\item Let $l(x)$ be the compositional inverse of $g(x)=x^rh(x)^s$ over $\mu_{\frac{q-1}{s}} =\left\{x\in{\gf}_{q}^*: x^\frac{q-1}{s}=1\right\}$  and $r'$ be an integer which satisfies $rr'\equiv1\pmod {q-1}$.
	\end{enumerate}
	Then it follows from Theorem \ref{main} that $$f^{-1}(x)=\left(\alpha(x) h(l(\beta(x)))^{s-1}\right)^{r^{'}}l(\beta(x))$$ is the compositional inverse of $ f(x) $, where $\alpha(x)=x^{q-s}$ and $\beta(x)=x^s$.
\end{example}

\begin{example}\cite[Theorem 3.7]{niu2019new}\label{niuExample}
	Assume that
	\begin{enumerate}[(1)]
		\item $ A=\gf_{q^m} $;
		\item $ f(x) = g\left(x^{q^i} - x + \delta\right) + cx \in \gf_{q^m}[x] $ permutes $ \gf_{q^m} $, where $q$ is a prime power, $m,i$ be positive integers with $ 1 \le i \le m-1 $, $ c \in \gf_{q^{\gcd(i,m)}}^* $ and $ g(x) \in \gf_{q^m}[x] $.  
		Then $ h(x) = g(x)^{q^i} - g(x) + cx +(1-c)\delta \in \gf_{q^m}[x]$ permutes $ \gf_{q^m} $ (see \cite[Proposition 3]{zheng2019two}), where $ \delta\in \gf_{q^m} $.
		\item $ \phi(x) =\overline{\phi}(x)=\left( -x^{q^i}, x^{q^i}-x+\delta \right) $;
		\item $  \phi(A)=\overline{\phi}(A) $;
		\item $ \psi(y,z)=\left(c^{q^i}y-g(z)^{q^i}, h(z) \right)$; and
		\item Assume $H(x)$ is the compositional inverse of $ h(x) $.
	\end{enumerate}
	Then it follows from Theorem \ref{main} that, for any $ \delta \in \gf_{q^m} $, the compositional inverse of $ f(x) $ is
	$$f^{-1}(x)= c^{-1}x^{q^i} - c^{-1}g\left(H(x^{q^i}-x+\delta)\right)^{q^i} -H(x^{q^i}-x+\delta)+\delta.$$
\end{example}

\begin{example}\cite[Theorem 1.2]{tuxanidy2014inverses}
	\label{tuxanidy2014inverses}
	Let
	\begin{enumerate}[(1)]
		\item $ A=\gf_{q^n} $;
		\item $f(x)=h(\psi_0(x)) \varphi_0(x)+g_0(\psi_0(x))$ permute {$\gf_{q^n}$}, where  $ {\varphi_0}, \psi_0 \in \gf_{q^n}[x] $ are additive polynomials, $ q $-polynomial $ \overline{\psi_0} $ satisfies $ \varphi_0 \circ \psi_0=\overline{\psi_0} \circ \varphi_0 $ and $ \left|\psi_0\left({\gf_{q^{n}}}\right)\right|=\left|\overline{\psi_0}\left( {\gf_{q^{n}}}\right)\right| $, and polynomial $ h \in \gf_{q^n}[x] $ such that $ h\left(\psi_0\left(\mathbb{F}_{q^{n}}\right)\right) \subseteq \mathbb{F}_{q} \backslash\{0\} $;
		\item $ \phi(x) =\phi_{\psi_0}\left(  x \right) =\left( \psi_0(x), x-\psi_0(x) \right)$, $ \overline{\phi}=\phi_{\overline{\psi_0}}\left( x   \right) = \left( \overline{\psi_0}(x), x-\overline{\psi_0}(x) \right) $  in \cite[Lemma 2.10]{tuxanidy2014inverses};
		\item $ \phi(A)=\phi_{\psi_0}\left(\mathbb{F}_{q^{n}}\right) ,\overline{\phi}(A)=\phi_{\overline{\psi_0}}\left(\mathbb{F}_{q^{n}}\right) $;
		\item $ \psi(y,z)=\left(h(y) \varphi_0(y)+\overline{\psi_0}(g_0(y)), h(y) \varphi_0(z)+g_0(y)-\overline{\psi_0}(g_0(y)) \right)$; and
		\item Assume that $ \left|S_{\psi_0}\right|=\left|S_{\overline{\psi_0}}\right| $ and $\ker(\varphi_0) \cap \psi_0\left(S_{\psi_0}\right)=\{0\} $.
		Then $ \varphi_0 $ induces a bijection from $ S_{\overline{\psi_0}} $ to $ S_{\overline{\psi_0}} $.
		Let $ \overline{f}^{-1}$ and $\left.\varphi_0^{-1}\right|_{S_{\overline{\psi_0}}} \in \gf_{q^{n}}[x] $ induce the inverses of $ \left.\overline{f}\right|_{\psi_0\left(\mathbb{F}_{q^n}\right)} (x)  =h(x) \varphi_0(x)+\overline{\psi_0}(g_0(x)) $ and $ \varphi_0 |_{S_{\psi_0}} $ respectively,
	\end{enumerate}
	Then it follows from Theorem \ref{main} that the compositional inverse of $ f(x) $ is given by
	$$ f^{-1}(x)=\overline{f}^{-1}(\overline{\psi_0}(x))+\left.\varphi_0^{-1}\right|_{S_{\overline{\psi_0}}}\left(\frac{x-\overline{\psi_0}(x)-g_0\left(\overline{f}^{-1}(\overline{\psi_0}(x))\right)+\overline{\psi_0}\left(g_0\left(\overline{f}^{-1}(\overline{\psi_0}(x))\right)\right)}{h\left(\overline{f}^{-1}(\overline{\psi_0}(x))\right)}\right) .$$
\end{example}

{
}

\begin{example}\cite[Theorem 2.3]{wu2013compositional} \label{wuExample}
	Let
	\begin{enumerate}[(1)]
		\item $ A=\gf_{q^n} $, where $ q $ is even and $ n $ is odd;
		\item $f(x)=x(L(\tr_{q^n/q} (x))+a \tr_{q^n/q} (x)+a x)$ permute $ \gf_{q^n} $, where $ xL(x) $ is a bilinear PP over $ \gf_q $ for a linearized polynomial $ L(x) \in \gf_q [x] $, $ a \in \gf_q^* $, and the trace function from $\gf_{q^n}$ to $\gf_{q}$ is denoted by $\tr_{q^n/q}(\cdot): x \to \sum_{i=0}^{n-1}x^{q^i}$;
		\item $ \phi(x) =\overline{\phi}(x)=\left( \tr_{q^n/q}(x), x + \tr_{q^n/q}(x) \right)$;
		\item $ \phi(A)=\overline{\phi}(A)=\gf_{q} \oplus  { \ker( \tr_{q^n/q})  }$;
		\item $ \psi(y,z)=\left(y L(y), a z^{2}+(L(y)+a y) z\right)$; and
		\item Let $ q = 2^m $ for a positive integer $ m $.
		Assume the compositional inverse of $ xL ( x ) $ is $ g_0( x ) \in \gf_q [ x ]  $.
	\end{enumerate}
	Then it follows from Theorem \ref{main} that the compositional inverse of $ f(x) $ is
	$$   f^{-1}(x)=a^{2^{m-1}-1} x^{2^{n m-1}}+\left(g_0(\tr_{q^n/q}(x))+a^{2^{m-1}-1} \sum_{k=1}^{\frac{n-1}{2}} x^{2^{(2 k-1) m-1}}\right)                                                         $$
	$$ {\times} \left(\frac{\tr_{q^n/q}(x)}{g_0(\tr_{q^n/q}(x))}+ag_0(\tr_{q^n/q}(x))\right)^{q-1}    $$
	$$  +\sum_{j=0}^{m-2} a^{2^{j}-1}\left(\frac{\tr_{q^n/q}(x)}{g_0(\tr_{q^n/q}(x))}+ag_0(\tr_{q^n/q}(x))\right)^{2^{m}-2^{j+1}}\left(\sum_{k=0}^{\frac{n-1}{2}} x^{q^{2 k}}\right)^{2^{j}}.  $$
\end{example}


{
As illustrated above, the key step of  this approach is to design suitable mappings  $\phi = (\lambda, \eta)$, $\bar{\phi} = (\overline{\lambda}, \overline{\eta})$ satisfying the required properties.   In the following we provide two new results which generalize the choices of $\eta$ and $\overline{\eta}$ in  Examples~\ref{liExample}-\ref{tuxanidy2014inverses}.
}


{

\begin{Cor} 	\label{mainc1}
	Let $A$ be a finite set and  $f$, $g$,  $\lambda, \overline{\lambda}$ be mappings satisfying the assumption of Lemma~\ref{LGWlemma} (e.g.,  satisfy the commutative diagram  in Fig.~1).  
	We assume $ \eta(x)=P(x)-\lambda(x) , \overline{\eta}(x)=P(x) - \overline{\lambda}(x)$, and $\tau$ are mappings   such that
	$\bar{\eta}\circ f=\tau \circ \eta$, where $ P(x) $ permutes $ A $.
	Then both 	$\phi=(\lambda, \eta)$
	and $\overline{\phi}=(\overline{\lambda}, \overline{\eta})$
	are bijective  mappings from $A$ to some subsets of $A \times A$  and $ {\phi }^{-1}(y,z)=P^{-1}(y+z)$, $ {\bar{\phi}}^{-1}(\alpha,\beta) =P^{-1}(\alpha+\beta) $ are their compositional inverses respectively.
Moreover, $ f $ is bijective  if and only if $ \psi =(g, \tau)$ is bijective.
	Furthermore, if both $ g $, $\psi $ are bijective and their compositional inverses are  denoted by $ g^{-1}(\alpha) $ and $ \psi^{-1}(\alpha,\beta)=\left(g^{-1}(\alpha) , M(\alpha,\beta) \right) $ respectively,  where $ M(\alpha,\beta): \im(\overline{\phi}) \to \im(\eta)  $, then
	$$f^{-1}(x)=  P^{-1} \left(g^{-1}(\overline{\lambda}(x)) + M(\overline{\lambda}(x), P(x) - \overline{\lambda}(x) ) \right)  $$
	is the compositional inverse of $f$ on $A$.
\end{Cor}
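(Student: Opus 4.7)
The plan is to treat this as a direct specialization of Theorem~\ref{main}, where the specific form $\eta = P - \lambda$ and $\bar{\eta} = P - \bar{\lambda}$ yields a particularly clean inverse for $\phi$ and $\bar{\phi}$. First I would verify that $\phi = (\lambda, \eta)$ is a bijection onto $\phi(A)$ with the claimed inverse $\phi^{-1}(y,z) = P^{-1}(y+z)$. Since $\lambda(x) + \eta(x) = \lambda(x) + (P(x)-\lambda(x)) = P(x)$ for every $x \in A$, if $(y,z) = \phi(x)$ then $y + z = P(x)$, and as $P$ permutes $A$ we recover $x = P^{-1}(y+z)$ uniquely. Conversely, the map $(y,z) \mapsto P^{-1}(y+z)$ restricted to $\phi(A)$ is a left and right inverse of $\phi$, confirming bijectivity. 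The identical argument applied to $\bar{\phi}$ yields $\bar{\phi}^{-1}(\alpha,\beta) = P^{-1}(\alpha+\beta)$.

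Next I would verify the commutative square $\bar{\phi} \circ f = \psi \circ \phi$ so that Theorem~\ref{main} applies. Writing out both sides,
\[
\bar{\phi}(f(x)) = \bigl(\bar{\lambda}(f(x)),\, \bar{\eta}(f(x))\bigr), \qquad \psi(\phi(x)) = \bigl(g(\lambda(x)),\, \tau(\eta(x))\bigr).
\]
The first coordinates agree by the AGW hypothesis $\bar{\lambda} \circ f = g \circ \lambda$ (Lemma~\ref{LGWlemma}), and the second coordinates agree by the standing assumption $\bar{\eta} \circ f = \tau \circ \eta$. Thus Theorem~\ref{main} yields the equivalence ``$f$ is bijective iff $\psi$ is bijective,'' and under bijectivity gives $f^{-1} = \phi^{-1} \circ \psi^{-1} \circ \bar{\phi}$.

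Finally I would assemble the explicit formula by composing the three pieces. Evaluating $\bar{\phi}(x) = (\bar{\lambda}(x),\, P(x)-\bar{\lambda}(x))$ and then $\psi^{-1}$, the assumed structure $\psi^{-1}(\alpha,\beta) = (g^{-1}(\alpha),\, M(\alpha,\beta))$ produces
\[
\psi^{-1}(\bar{\phi}(x)) = \bigl(g^{-1}(\bar{\lambda}(x)),\, M(\bar{\lambda}(x),\, P(x) - \bar{\lambda}(x))\bigr).
\]
Applying $\phi^{-1}$, which sums the two coordinates and then applies $P^{-1}$, gives exactly the claimed expression
\[
f^{-1}(x) = P^{-1}\bigl(g^{-1}(\bar{\lambda}(x)) + M(\bar{\lambda}(x),\, P(x) - \bar{\lambda}(x))\bigr).
\]

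The proof is essentially a bookkeeping exercise once one observes that the ansatz $\eta = P - \lambda$ is tailored so that the pair $(\lambda, \eta)$ determines $x$ through the single invertible map $P$. The only subtle point — which I would flag rather than labor over — is that one must check that $\psi$ indeed maps $\phi(A)$ into $\bar{\phi}(A)$ (and $\psi^{-1}$ likewise) so that the compositions $\phi^{-1} \circ \psi^{-1} \circ \bar{\phi}$ are well-defined; this however follows automatically from the commutativity of the diagram together with bijectivity of $\phi$ and $\bar{\phi}$, so no serious obstacle arises.
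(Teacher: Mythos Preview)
Your proposal is correct and follows essentially the same approach as the paper: both verify that $\phi$ and $\bar{\phi}$ are bijections with inverse $(y,z)\mapsto P^{-1}(y+z)$ using the identity $\lambda(x)+\eta(x)=P(x)$, and then invoke Theorem~\ref{main}. Your write-up is in fact slightly more detailed than the paper's, which simply says ``the rest of proof follows from Theorem~\ref{main}'' after establishing the bijectivity and inverse formulas.
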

\begin{proof}
		{
		Assume $\phi(x) = \phi(x^\prime)$.  Then $\lambda(x) = \lambda(x^\prime)$ and
		$P(x) - \lambda(x)  =\eta(x) = \eta(x^\prime) =  P(x^\prime) - \lambda(x^\prime)$. Hence $P(x) = P(x^\prime)$. Because
		$P(x)$ permutes $A$, we must have $x = x^\prime$ and thus $\phi$ is a bijection.
Plug $y=\lambda(x), z=P(x)-\lambda(x)$ into $P^{-1}(y+z)$, one can obtain $P^{-1}(\lambda(x)+P(x)-\lambda(x) ) = x$.
Thus $ {\phi }^{-1}(y,z)=P^{-1}(y+z)$.
Similarly,  $\bar{\phi}$ is bijective and $ {\bar{\phi}}^{-1}(\alpha,\beta) =P^{-1}(\alpha+\beta) $.
The rest of proof follows from  Theorem \ref{main}.
}
\end{proof}
\begin{Cor}
	\label{mainc2}
	Let $A$ be a finite set and  $f$, $g$,  $\lambda, \overline{\lambda}$ be mappings satisfying the assumption of Lemma~\ref{LGWlemma} (e.g.,  satisfy the commutative diagram  in Fig.~1).  
Let	$ \lambda(x), \bar{\lambda}(x)  \ne 0 $ for any $ x \in A $.
	We assume $ \eta(x)=P\left( \frac{x}{\lambda(x)} \right)$, $\overline{\eta}(x)=P\left( \frac{x}{\overline{\lambda}(x)} \right) $,  and $\tau$ are mappings   such that
	$\bar{\eta}\circ f=\tau \circ \eta$, where $ P(x) $ permutes $ A $.
	Then both 	$\phi=(\lambda, \eta)$
	and $\overline{\phi}=(\overline{\lambda}, \overline{\eta})$
	are bijective  mappings from $A$ to some subsets of $A \times A$  and $ {\phi }^{-1}(y,z)=yP^{-1}(z)$, $ {\bar{\phi}}^{-1}(\alpha,\beta) =\alpha P^{-1}(\beta) $ are their compositional inverses respectively.
Moreover, $ f $ is bijective  if and only if $\psi =(g, \tau) $ is bijective.
	Furthermore, if both $ g $, $ \psi $ are bijective and their compositional inverses are respectively denoted by $ g^{-1}(\alpha) $, $ \psi^{-1}(\alpha,\beta)=\left(g^{-1}(\alpha) , M(\alpha,\beta) \right) $, where $ M(\alpha,\beta): \im(\overline{\phi}) \to \im(\eta)  $, then
	$$f^{-1}(x)= g^{-1} \left(  \overline{\lambda}(x) \right)  P^{-1} \left( M\left(  \overline{\lambda}(x),P\left( \frac{x}{\overline{\lambda}(x)} \right)   \right)  \right)  $$
	is the compositional inverse of $f$ on $A$.
\end{Cor}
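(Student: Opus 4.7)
The plan is to mirror the proof of Corollary \ref{mainc1}, but swap the additive relation $\eta=P-\lambda$ for the multiplicative relation $\eta(x)=P(x/\lambda(x))$. The hypothesis $\lambda(x),\overline{\lambda}(x)\ne 0$ is exactly what makes those quotients well-defined on all of $A$. The first step is to verify that $\phi=(\lambda,\eta)$ is injective. Suppose $\phi(x)=\phi(x')$; then $\lambda(x)=\lambda(x')$, and $P(x/\lambda(x))=P(x'/\lambda(x'))$ together with $P$ being a permutation of $A$ forces $x/\lambda(x)=x'/\lambda(x')$. Multiplying both sides by the common value $\lambda(x)=\lambda(x')$ yields $x=x'$. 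An identical argument applies to $\overline{\phi}$, so both maps are bijections onto their images in $A\times A$.

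Next I would exhibit the compositional inverses explicitly. For any $x\in A$, setting $y=\lambda(x)$ and $z=\eta(x)=P(x/\lambda(x))$ and applying $P^{-1}$ to the second coordinate gives $P^{-1}(z)=x/\lambda(x)=x/y$, hence $yP^{-1}(z)=x$. This confirms $\phi^{-1}(y,z)=yP^{-1}(z)$, and the formula $\overline{\phi}^{-1}(\alpha,\beta)=\alpha P^{-1}(\beta)$ is obtained the same way.

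Once $\phi$ and $\overline{\phi}$ have been shown to be bijective, the equivalence between the bijectivity of $f$ and of $\psi$, as well as the decomposition $f^{-1}=\phi^{-1}\circ\psi^{-1}\circ\overline{\phi}$, follow directly from Theorem \ref{main}. To invoke that theorem I only need to check the commutation $\overline{\phi}\circ f=\psi\circ\phi$, which holds componentwise: the first coordinate is the AGW identity $\overline{\lambda}\circ f=g\circ\lambda$ supplied by Lemma \ref{LGWlemma}, and the second coordinate is the standing hypothesis $\overline{\eta}\circ f=\tau\circ\eta$.

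Finally, to read off the closed form, I substitute $\overline{\phi}(x)=(\overline{\lambda}(x),\,P(x/\overline{\lambda}(x)))$, then apply $\psi^{-1}(\alpha,\beta)=(g^{-1}(\alpha),M(\alpha,\beta))$, then apply $\phi^{-1}(y,z)=yP^{-1}(z)$; composing these three steps produces exactly the stated expression for $f^{-1}(x)$. The only thing to watch is domain bookkeeping: one must use that $M(\alpha,\beta)$ lands in $\im(\eta)$ so that $P^{-1}$ can legitimately be applied, which is precisely what the typing $M:\im(\overline{\phi})\to\im(\eta)$ built into the statement guarantees. I do not anticipate a real obstacle; the argument is a clean multiplicative translation of the additive verification carried out in Corollary \ref{mainc1}.
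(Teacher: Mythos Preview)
Your proposal is correct and follows essentially the same approach as the paper: verify injectivity of $\phi$ via $\lambda(x)=\lambda(x')$ and the permutation property of $P$, exhibit $\phi^{-1}(y,z)=yP^{-1}(z)$ by direct substitution, and then appeal to Theorem~\ref{main}. If anything, your version is slightly more explicit in checking the commutation $\overline{\phi}\circ f=\psi\circ\phi$ componentwise and in noting the domain compatibility for $M$, but the argument is the same.
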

\begin{proof}
	{
	Assume $\phi(x) = \phi(x^\prime)$.  Then $\lambda(x) = \lambda(x^\prime)$ and
		$P(\frac{x}{ \lambda(x)})  =\eta(x) = \eta(x^\prime) =  P(\frac{x^\prime }{\lambda(x^\prime)})$.
		Because		$P(x)$ permutes $A$,  the latter implies that
		$\frac{x}{ \lambda(x) }= \frac{x^\prime }{\lambda(x^\prime)}$. Since  $\lambda(x) = \lambda(x^\prime)$,
		we must have 	 		$x = x^\prime$ and thus $\phi$ is a bijection.
	Plug $y=\lambda(x), z=P\left( \frac{x}{\lambda(x)} \right)$ into $yP^{-1}(z)$, one can obtain $\lambda(x)P^{-1}(P\left( \frac{x}{\lambda(x)} \right))= x$.
	Thus $ {\phi }^{-1}(y,z)=yP^{-1}(z)$.
	Similarly,  $\bar{\phi}$ is bijective and  $ {\bar{\phi}}^{-1}(\alpha,\beta) =\alpha P^{-1}(\beta) $.
	The rest of proof follows from  Theorem \ref{main}.
}
\end{proof}

}

We demonstrate more specific choices of $\eta$ and $\overline{\eta}$ in the next few sections and explicitly compute the compositional invereses of four more classes of AGW-PPs. 	
{In addition, we analyze the results of compositional inverses to obtain conditions for  being involutions.
Furthermore, we provide at least one explicit involutory construction for each involutory criterion for the purpose
of demonstration, although we believe that it may be not hard to find more general involutory constructions.
In the sequel,  we need the following lemma on involutions over finite sets.}
\begin{Lem}
	\label{maininvolution}
	Let $A$ and $S$ be finite sets, and let $f:A \to A$, $g: S \to S$, $\lambda:A \to S$ be maps such that $\lambda$ is surjective and $\lambda \circ f = g \circ \lambda$.
	Assume $f$ is an involution on $A$.
	Then $g$ is an involution on $S$.
\end{Lem}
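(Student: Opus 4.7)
The plan is to chase the commutative diagram twice and exploit the surjectivity of $\lambda$ to transfer the involutory property from $f$ down to $g$. The hypothesis $\lambda\circ f = g\circ\lambda$ says that applying $f$ on $A$ and then projecting via $\lambda$ is the same as projecting first and then applying $g$ on $S$. Iterating this once more will let us pull $f\circ f$ through $\lambda$ and obtain $g\circ g$ on the other side.

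First, I would fix an arbitrary $s\in S$ and, using the surjectivity of $\lambda$, pick some $a\in A$ with $\lambda(a)=s$. Then I would compute
\begin{equation*}
g(g(s)) \;=\; g\bigl(g(\lambda(a))\bigr) \;=\; g\bigl(\lambda(f(a))\bigr) \;=\; \lambda\bigl(f(f(a))\bigr) \;=\; \lambda(a) \;=\; s,
\end{equation*}
where the second and third equalities apply the intertwining relation $\lambda\circ f = g\circ \lambda$ once each, and the fourth uses the assumption $f\circ f = I_A$. Since $s$ was arbitrary, this gives $g\circ g = I_S$, so $g$ is an involution on $S$.

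I do not anticipate any real obstacle here; the statement is essentially a diagram-chasing lemma, and the only subtle point is that $g$ must be well-defined as a self-map of $S$ (which is part of the hypothesis) and a bijection. The bijectivity of $g$ is automatic from $g\circ g = I_S$, but it also follows a priori from Lemma~\ref{LGWlemma} applied to the involution (hence bijection) $f$ with $\overline{\lambda}=\lambda$ and $\overline{S}=S$. Thus the only ingredient truly being used is surjectivity of $\lambda$ together with the commuting square and $f\circ f = I_A$, and no additional machinery is required.
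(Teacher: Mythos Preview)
Your proof is correct and is essentially the same argument as the paper's: the paper writes the identity compositionally as $\lambda = \lambda\circ f\circ f = g\circ\lambda\circ f = g\circ g\circ\lambda$ and then invokes surjectivity of $\lambda$, while you carry out the same chain pointwise by choosing a preimage $a$ of an arbitrary $s$. The logic is identical.
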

{
\begin{proof}
	We obtain $\lambda = \lambda \circ f \circ f = g \circ \lambda \circ f = g \circ g \circ \lambda$.
	Since $\lambda$ is surjective, $g$ is an involution on $S$ according to $\lambda = g \circ g \circ \lambda$.	
\end{proof}
This is a direct consequence of \cite[Proposition 2.2]{niu2019new}.}  From now on,  we always assume $ g $ is an involution whenever we consider  the involution  $f$.


\section{Compositional inverses of AGW-PPs in the multiplicative case}
\label{multiplicative}


{
In order to state our results we need the following terminology in \cite{akbary2009permutation}.
For any nonconstant monic polynomial $f(x) \in \mathbb{F}_{q}[x]$ of degree $\leqslant q-1$ with $f(0)=0,$ let $r$ be the vanishing order of $f(x)$ at zero and let $f_{1}(x):=f(x) / x^{r}$.
Let $\ell$ be the least divisor of $q-1$ with the property that there exists a polynomial $h(x)$  such that $f_{1}(x)=h\left(x^{(q-1) / \ell}\right) .$
So $f(x)$ can be written uniquely as $x^{r} h\left(x^{(q-1) / \ell}\right) $. The integer $\ell$ is called the index of $f$.  The
 AGW criterion is very useful to study  PPs of the form $x^rh(x^{(q-1)/\ell})$ such that  $\ell < q-1$.
More details can be found in \cite{akbary2009permutation,wang2013cyclotomy}.
}


The following result was discovered independently by several authors, and we want to point it out that  it is actually the multiplicative case of the AGW criterion.

\begin{Lem}
\cite[Theorem 2.3]{park2001permutation}
\cite[Theorem 1]{wang2007cyclotomic}
\cite[Lemma 2.1]{zieve2009some}
\label{Cri}
Let $q$ be a prime power and $f(x)=x^rh\left(x^s\right) \in\gf_q[x]$, where $s=\frac{q-1}{\ell}$ and $\ell $ is an integer.
Then $f(x)$ permutes $\gf_q$ if and only if
\begin{enumerate}[(1)]
	\item $\gcd\left(r,s\right)=1$ and
	\item $g(x)=x^rh(x)^s$ permutes $\mu_{\ell},$ where $\mu_{\ell}=\left\{x\in{\gf}_{q}^*  : ~~ x^{\ell}=1\right\}$.
\end{enumerate}
\end{Lem}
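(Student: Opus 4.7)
The plan is to derive this statement as a direct application of the AGW criterion (Lemma~\ref{LGWlemma}) in its multiplicative incarnation. I will take $A = \fq^*$, $S = \overline{S} = \mu_\ell$, and $\lambda = \overline{\lambda}: x \mapsto x^s$; the polynomial $g(x) = x^r h(x)^s$ in the statement will play the role of $g$ in Lemma~\ref{LGWlemma}. Since $s\ell = q-1$, the $s$-th power map is a surjection $\fq^* \to \mu_\ell$; and because $r \geq 1$ we have $f(0) = 0$, so it is enough to work on $\fq^*$ (granted that $h$ does not vanish on $\mu_\ell$, which will be forced as soon as $g$ permutes $\mu_\ell$).

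The first step is to verify the commutativity $\overline{\lambda}\circ f = g\circ \lambda$ by the one-line computation $f(x)^s = (x^r h(x^s))^s = (x^s)^r h(x^s)^s = g(x^s)$. Lemma~\ref{LGWlemma} then tells me that $f$ permutes $\fq^*$ if and only if $g$ permutes $\mu_\ell$ and $f$ is injective on each fiber $\lambda^{-1}(c)$ with $c \in \mu_\ell$. The second step is to analyse those fibers: the kernel of $x\mapsto x^s$ in $\fq^*$ is the group $\mu_s$ of $s$-th roots of unity, which has order $\gcd(s,q-1)=s$, so every fiber is a coset $x_0\mu_s$ with $x_0^s=c$. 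On this coset $f$ takes the form
\[
f(x_0\zeta) = (x_0^r h(c))\,\zeta^r, \qquad \zeta \in \mu_s,
\]
so the injectivity of $f$ on this fiber is equivalent to the $r$-th power map being injective on $\mu_s$, which in turn is equivalent to $\gcd(r,s)=1$. Combining the two conditions gives both directions of the lemma.

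I do not expect any real obstacle; the only subtle point is the bookkeeping around $x=0$. Specifically, I would note at the outset that if $g$ permutes $\mu_\ell$ then $h(c)\neq 0$ for every $c \in \mu_\ell$ (otherwise $g(c)=0\notin \mu_\ell$), so $f$ does send $\fq^*$ into $\fq^*$ and Lemma~\ref{LGWlemma} can be applied cleanly on $A=\fq^*$; conversely, if $f$ permutes $\fq$ then $f(0)=0$ and $f$ permutes $\fq^*$, so the same set-up applies. After this remark the two steps above finish the proof in both directions.
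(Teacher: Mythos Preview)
Your proof is correct. The paper does not actually prove this lemma---it is quoted as a known result from \cite{park2001permutation,wang2007cyclotomic,zieve2009some}---but it does remark, just before stating it, that this is ``actually the multiplicative case of the AGW criterion,'' which is precisely the approach you carry out. Your handling of the point $x=0$ and of the nonvanishing of $h$ on $\mu_\ell$ is also in line with the paper's standing assumption that $h(x)\neq 0$ for $x\in\mu_\ell$.
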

For any $s\mid {(q-1)}$,  by Lemma \ref{Cri}, the key point to determine the permutation property of $f(x)=x^rh\left(x^s\right)$ over $\gf_q$ is to consider whether $g(x)=x^rh(x)^s$ permutes $\mu_{\ell}$ or not.
It should be noted that $g(x)$ is always restricted over $\mu_{\ell}$.
Also we always assume that $h(x)\neq 0$ for any $x\in\mu_{\ell}$.
Otherwise, it is easy to see that $  f(x)=x^rh(x^s) $ can not permute $ \gf_{q} $.

\mkq{Using our method introduced in Section~\ref{general}, we can explicitly give the  compositional inverse of  a PP  $f(x) = x^rh(x^s)$   in terms of the compositional inverse of $g(x) = x^rh(x)^s$.  This
extends an earlier result by Li et al. in \cite{li2019compositional}, who transformed the problem of computing the compositional inverse of $ x^rh\left(x^s\right) $ into computing the compositional inverses of two restricted permutation mappings, where one of them is $g(x)=x^rh(x)^s$, and the other is $ x^{q-s} $. }
In terms of the language described in Theorem \ref{main}, Li et al. used $\eta = \overline{\eta} = x^{q-s}$ in order to assure that $\phi = \overline{\phi}= \left(x^s,x^{q-s} \right)$ is a bijection. However, their result only works when $\gcd\left(r,q-1\right)=1$. 	
In fact, from Lemma \ref{Cri},   the permutation property of $f $ only requires  that $g$ is bijective and  $\gcd(r,s)=1$.  Therefore we deal with this most general case and fill up the gap.

\begin{Th}
\label{mainmul}
Let $f(x)=x^rh\left(x^s\right)$ defined as in Lemma \ref{Cri} be a permutation over $\gf_q$ and $g^{-1}(x)$ be the compositional inverse of $g(x)=x^rh(x)^s$ over $\mu_{\frac{q-1}{s}}$. Suppose  $ a $ and $b $ are two integers  satisfying $ as+br=1 $.
Then the compositional inverse of $f(x)$ in $\gf_q[x]$ is given by
$$f^{-1}(x)=g^{-1}(x^s)^a x^b h\left(g^{-1}(x^s)\right)^{-b}.$$
\end{Th}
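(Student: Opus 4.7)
The plan is to apply Theorem~\ref{main} with an asymmetric choice
\[
\phi(x)=(x^{s},x^{r})\qquad\text{and}\qquad\overline{\phi}(x)=(x^{s},x),
\]
viewed as maps from $\gf_q$ into $\gf_q\times\gf_q$. Clearly $\overline{\phi}$ is injective with inverse $\overline{\phi}^{-1}(\alpha,\beta)=\beta$. For $\phi$, the assumption $\gcd(r,s)=1$ together with the Bezout relation $as+br=1$ gives injectivity (if $x^{s}=(x')^{s}$ and $x^{r}=(x')^{r}$ for $x,x'\in\gf_q^{\ast}$, then $(x/x')^{as+br}=1$ forces $x=x'$) and simultaneously supplies the explicit compositional inverse $\phi^{-1}(y,z)=y^{a}z^{b}$.

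Next I would identify the unique $\psi$ forced by $\overline{\phi}\circ f=\psi\circ\phi$. Writing $y=x^{s}$, $z=x^{r}$ in $\overline{\phi}(f(x))=(f(x)^{s},f(x))$ and using $f(x)^{s}=x^{rs}h(x^{s})^{s}=g(y)$ together with $f(x)=x^{r}h(x^{s})=z\,h(y)$, I obtain
\[
\psi(y,z)=\bigl(g(y),\;z\,h(y)\bigr).
\]
Bijectivity of $\psi$ then reduces to that of $g$ on $\mu_{(q-1)/s}$, which is guaranteed by Lemma~\ref{Cri}; the standing assumption that $h$ is nonvanishing on $\mu_{\ell}$ makes the inverse
\[
\psi^{-1}(\alpha,\beta)=\Bigl(g^{-1}(\alpha),\;\beta/h(g^{-1}(\alpha))\Bigr)
\]
well defined.

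The last step is to assemble Theorem~\ref{main}: $f^{-1}=\phi^{-1}\circ\psi^{-1}\circ\overline{\phi}$ evaluates to
\[
f^{-1}(x)=\phi^{-1}\!\left(g^{-1}(x^{s}),\;\frac{x}{h(g^{-1}(x^{s}))}\right)
=g^{-1}(x^{s})^{a}\,x^{b}\,h\bigl(g^{-1}(x^{s})\bigr)^{-b},
\]
which is the claimed formula. The case $x=0$ is a routine check: extending $g$ by $g(0)=0$, the Bezout coefficients can be chosen (or reduced modulo $q-1$) so that $b\ge 1$, and then the factor $x^{b}$ forces $f^{-1}(0)=0$ in agreement with $f(0)=0$.

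I expect the obstacle here to be conceptual rather than computational. The earlier approach of \cite{li2019compositional} used $\phi=\overline{\phi}=(x^{s},x^{q-s})$, which forces one to invert the map $z\mapsto z^{r}$ on the second coordinate and therefore requires the extra hypothesis $\gcd(r,q-1)=1$. Making $\phi$ and $\overline{\phi}$ asymmetric---using $x^{r}$ in $\phi$ precisely to absorb the $r$-th power that would otherwise appear in $\tau$, while keeping the second coordinate of $\overline{\phi}$ equal to $x$ so that $\overline{\phi}^{-1}$ remains a trivial projection---is exactly what eliminates that hypothesis and allows the theorem to hold whenever $\gcd(r,s)=1$.
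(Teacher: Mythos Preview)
Your proof is correct and slightly cleaner than the paper's. The paper also applies Theorem~\ref{main}, but with the \emph{symmetric} choice $\phi=\overline{\phi}=(x^{s},x^{r})$. This forces $\psi(y,z)=(g(y),\,z^{r}h(y)^{r})$, so inverting $\psi$ requires solving $z^{r}=\beta\,h(g^{-1}(\alpha))^{-r}$ for $z$ inside the constrained image $\{x^{r}:x\in\gf_q\}$; the paper spends a paragraph verifying that the candidate $z=x_{(\alpha,\beta)}\,h(g^{-1}(x_{(\alpha,\beta)}^{s}))^{-1}$ actually lands in that set by writing $x_{(\alpha,\beta)}=f(x_0)$ and unwinding. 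Your asymmetric choice $\overline{\phi}=(x^{s},x)$ makes the second component of $\psi$ linear in $z$, namely $\tau(y,z)=z\,h(y)$, so the inversion step becomes a one-line division and no range argument is needed. Both routes deliver the same formula; yours trades a small asymmetry in the setup for a noticeably shorter computation of $\psi^{-1}$.
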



\begin{proof}
Let $\phi$ be a map defined by $$\phi: \gf_q \to \phi(\gf_q),$$ $$x\mapsto \left( \lambda(x), \eta (x) \right) = \left(x^{s},x^r\right),$$
where $\phi(\gf_q)=\left\{ \left( x^{s}, x^{r} \right):  ~~ x\in \gf_{q}  \right\} .$
Given one element $\left( y,z\right)\in \phi(\gf_q)$, there exists $x_0\in\gf_{q}$ such that $y=x_0^s$ and $z=x_0^r$.
Furthermore, it is clear that $y^az^b = x_0^{as+br}=x_0$.
Thus $\phi$ is a bijection and $ \phi^{-1}(y,z)=y^az^b $.

Next, we compute the expression of $\psi$ {from the relation $\psi \circ \phi(x) =\phi \circ f(x) $}.
From a simple computation, we get
\begin{equation}
	\label{mul1c}
	\phi \circ f(x) = \left(   x^{sr}h(x^s)^s ,  x^{r^2}h(x^s)^r  \right).
\end{equation}
Substituting $ x^s $ and $ x^r $ in Eq. (\ref{mul1c}) with $ y $ and $ z $ respectively, we obtain
$$\psi(y,z): \phi(\gf_q) \to \phi(\gf_q),$$
$$(y,z) \mapsto\left(y^rh(y)^s, z^rh(y)^r \right).$$

Now we compute the compositional inverse of $\psi$.
For $(y,z), (\alpha,\beta)\in \phi(\gf_q)$ with $ \psi(y,z)=(\alpha,\beta) $, we have
\begin{equation*}
	\left\{
	\begin{aligned}
		y^rh(y)^s &=& \alpha,  \\
		z^rh(y)^r &=& \beta. \\
	\end{aligned}
	\right.
\end{equation*}
Then we get $$y=g^{-1}(\alpha).$$
Moreover,
\begin{equation}
	z^r=\beta h(y)^{-r} = \beta h\left( g^{-1}(\alpha)\right)^{-r}.    \label{zzz}
\end{equation}
Clearly, for any given $(\alpha, \beta)\in\phi(\gf_q)$, there exists a unique element denoted by {${x_{(\alpha,\beta)}}\in\gf_q$} such that $ \alpha={x_{(\alpha,\beta)}}^s$  and $ \beta={x_{(\alpha,\beta)}}^r $.
Therefore, $z^r = \beta h\left( g^{-1}(\alpha)\right)^{-r} = {x_{(\alpha,\beta)}}^rh\left( g^{-1}({x_{(\alpha,\beta)}}^s)  \right)^{-r}$. To show that $z= x_{(\alpha,\beta)}h\left( g^{-1}({x_{(\alpha,\beta)}}^s)  \right)^{-1} $ is the unique solution to Eq. (\ref{zzz}), it suffices to prove that $x_{(\alpha,\beta)}h\left( g^{-1}({x_{(\alpha,\beta)}}^s)  \right)^{-1} \in \{x^r: ~~ x \in \gf_q\}$.
Since $f$ is a PP, there exists a unique $x_0\in\gf_q$ such that $ {x_{(\alpha,\beta)}}=f(x_0)=x_0^r h\left(x_0^s\right)  $.
Furthermore, we have $$g^{-1}(f(x_0)^s) =g^{-1}(x_0^{rs}h(x_0^s)^s) = g^{-1}(g(x_0^s)) = x_0^s. $$
Plugging ${x_{(\alpha,\beta)}}=x_0^r h\left(x_0^s\right)$ into $ {x_{(\alpha,\beta)}}h\left(  g^{-1}({x_{(\alpha,\beta)}}^s)  \right)^{-1} $, we obtain $${x_{(\alpha,\beta)}}h\left(  g^{-1}({x_{(\alpha,\beta)}}^s)  \right)^{-1} = x_0^{r} h\left(x_0^s\right) h(g^{-1}(f(x_0)^s))^{-1} = x_0^r h\left(x_0^s\right) h\left(x_0^s\right)^{-1} =x_0^r,$$ which belongs to $\{x^r: ~~ x \in \gf_q\}$.
Thus $ z={x_{(\alpha,\beta)}}h\left(  g^{-1}({x_{(\alpha,\beta)}}^s)  \right)^{-1} $ and then $$\psi^{-1}\left(\alpha,\beta \right)=\left( g^{-1}(\alpha), {x_{(\alpha,\beta)}}h\left(  g^{-1}({x_{(\alpha,\beta)}}^s)  \right)^{-1} \right).$$

{According to Theorem \ref{main},  the compositional inverse of $ f(x) $ is
$$f^{-1}(x)={\phi}^{-1} \circ \psi^{-1} \circ {\phi}(x)= g^{-1}(\alpha)^a  z^b  = g^{-1}(x^s)^a x^b h\left(g^{-1}(x^s)\right)^{-b}.$$}
\end{proof}

%
%
%
%
%
%
%
%
%
%

\mkq{
Theorem \ref{mainmul} can be verified by $f^{-1}(f(x))=x$ for any $x\in\gf_{ q}^*$.
First, it is clear that $ g^{-1}(f(x)^s) =g^{-1}(x^{rs}h(x^s)^s) = g^{-1}(g(x^s)) = x^s$.
Then we have
\begin{equation*}
\begin{aligned}
 f^{-1}\left(f(x)\right) &=  g^{-1}(f(x)^s)^a (x^rh\left(x^s\right))^b h\left(g^{-1}(f(x)^s)\right)^{-b}   \\
  &=x^{as} x^{br} h(x^s)^b h(x^s)^{-b} \\
  &=x^{as+br}=x.
 \end{aligned}
\end{equation*}
}
{Although this  provides a shorter proof of Theorem~\ref{mainmul}, we preferred to  give the current proof so that we can demonstrate how to use our approach to find the compositional inverse.}

Theorem \ref{mainmul} provides the explicit compositional inverse of a PP of the form $f(x) = x^rh(x^s)$ on $ \gf_q $ by computing the compositional inverse of $g(x) = x^rh(x)^s$ on $ \mu_{\frac{q-1}{s}} $, extending the result \cite[Theorem 2.3]{li2019compositional} which needs an additional condition $\gcd\left(r,q-1\right)=1$.
{ In the following, we explain that Theorem \ref{mainmul} is consistent with \cite[Theorem 2.3]{li2019compositional} under the  condition $\gcd\left(r,q-1\right)=1$.
In \cite[Theorem 2.3]{li2019compositional}, the authors obtained that the compositional inverse of $f(x) = x^rh(x^s)$ is
\begin{equation}
	\label{Li_in}
	f^{-1}(x)=\left( x^{q-s}h(g^{-1}(  x^s  ))^{s-1}\right)^{r'}g^{-1}(x^s),
\end{equation}
 where $\gcd\left(r,q-1\right)=1$ and $r'$ be an integer satisfying $rr'\equiv1\pmod {q-1}$.
Let $k$ be an integer satisfying $ r'  r+ k(q-1)=1$.
Since $\gcd\left(r,s\right)=1$, we assume $k_a s    +  k_b r=1$.
By letting $a=1   +k(1-s) k_a(q-1)$  and  $ b=  (1-s) r'  +k(1-s) k_b(q-1) $, one can verify $as+br=1$ and thus derive
 \eqref{Li_in} using the expression of $f^{-1}(x)$ in Theorem \ref{mainmul}. }
As consequences, many explicit compositional inverses of PPs of the form $ x^rh(x^s) $  given in \cite{li2019compositional}
can be obtained without the assumption $\gcd\left(r,q-1\right)=1$.

We remark that an explicit expression of the compositional inverse of $x^rh(x^s)$ in terms of roots of unities was given in  \cite{wang2009} and { more generally the inverses of cyclotomic mappings were provided in \cite[Theorem 2]{wang2017note} and \cite[Theorem 3.3]{zheng2016piecewise}.
In \cite{wang2017note}, a fast algorithm to generate cyclotomic PPs, their inverses, and involutions was provided.}
In contrast, our method explores the connections between the inverses of $f$ and $g$, sometimes, it can help us to obtain simpler expression of the compositional inverse.

{
Next we demonstrate  how to use Theorem \ref{mainmul} to obtain the compsitional inverses of  PPs in  \cite[Theorem 1.2]{zievePermutationPolynomialsForm2009} or \cite[Theorem 4.1]{akbaryPolynomialsFormRh2008}.
There are many concrete examples satisfying conditions in Corollary \ref{lalala}, and one of them can be found in Example \ref{kuozhancor}.
\begin{Cor}\label{lalala}
Let $s \mid(q-1), h(x) \in \mathbb{F}_{q}[x]$ satisfies that $h(\zeta)^{s}=\zeta^{n}$ for every $\zeta \in \mu_{(q-1) / s} $.
Suppose $\gcd(r+n,(q-1) / s)=1$ and $t  , a,b $ be integers that satisfy $(r+n) t \equiv 1(\bmod (q-1) / s)$ and $ar+bs=1$.
Then $f(x)=x^{r} h\left(x^{s}\right)$ permutes $\mathbb{F}_{q}$, $g(\zeta)=\zeta^{r} h(\zeta)^{s}=\zeta^{r+n}  $,  and $g^{-1}(\zeta)=\zeta^t$.
Moreover, the compositional inverse of $f(x)$ in $\mathbb{F}_{q}[x]$ is
$$f^{-1}(x)=x^{ast+b}  h\left(x^{st}\right)^{-b}.$$
\end{Cor}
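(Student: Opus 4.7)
The plan is to specialize Theorem \ref{mainmul} to the present setting, where the hypothesis $h(\zeta)^s=\zeta^n$ for $\zeta\in\mu_{(q-1)/s}$ forces $g(\zeta)=\zeta^r h(\zeta)^s$ to collapse to the monomial $\zeta^{r+n}$ on $\mu_{(q-1)/s}$. Since the compositional inverse of a monomial is again a monomial, the rather abstract expression in Theorem \ref{mainmul} turns into the concrete power-product formula stated in the corollary.

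First I would verify the two claims about $g$. Substituting $h(\zeta)^s=\zeta^n$ directly yields $g(\zeta)=\zeta^{r+n}$ on $\mu_{(q-1)/s}$. From $ar+bs=1$ we have $\gcd(r,s)=1$, and the assumption $\gcd(r+n,(q-1)/s)=1$ combined with the fact that $\mu_{(q-1)/s}$ is cyclic of order $(q-1)/s$ shows that $\zeta\mapsto \zeta^{r+n}$ is a bijection of $\mu_{(q-1)/s}$. Hence by Lemma \ref{Cri}, $f(x)=x^rh(x^s)$ permutes $\mathbb{F}_q$. The compositional inverse $g^{-1}(\zeta)=\zeta^t$ then follows at once from $(r+n)t\equiv 1\pmod{(q-1)/s}$, since for $\zeta\in\mu_{(q-1)/s}$ we have $(\zeta^t)^{r+n}=\zeta^{(r+n)t}=\zeta$.

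Next I would plug these into Theorem \ref{mainmul}. For every $x\in\mathbb{F}_q^*$, $x^s\in\mu_{(q-1)/s}$ because $(x^s)^{(q-1)/s}=x^{q-1}=1$, so
\begin{equation*}
g^{-1}(x^s)=(x^s)^t=x^{st}.
\end{equation*}
Substituting this into $f^{-1}(x)=g^{-1}(x^s)^a\, x^b\, h\!\left(g^{-1}(x^s)\right)^{-b}$ immediately produces
\begin{equation*}
f^{-1}(x)=x^{ast}\,x^b\,h(x^{st})^{-b}=x^{ast+b}\,h(x^{st})^{-b},
\end{equation*}
as claimed. The value at $x=0$ is forced by $f(0)=0$, and $h(x^{st})$ does not vanish on $\mathbb{F}_q^*$ because for $\zeta\in\mu_{(q-1)/s}$ we have $h(\zeta)^s=\zeta^n\neq 0$. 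There is no genuine obstacle; the only bookkeeping step is to observe that the hypothesis $ar+bs=1$ supplies exactly the integers $a,b$ required by Theorem \ref{mainmul}.
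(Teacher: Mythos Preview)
Your argument is correct and is exactly the specialization of Theorem~\ref{mainmul} that the paper intends; the corollary is stated there without a separate proof. One small caveat: the corollary's hypothesis is written as $ar+bs=1$, whereas Theorem~\ref{mainmul} requires $as+br=1$, so your final sentence that ``$ar+bs=1$ supplies exactly the integers $a,b$ required by Theorem~\ref{mainmul}'' glosses over a typographical swap in the paper---strictly speaking, the pair $(a,b)$ of the corollary plays the role of $(b,a)$ in Theorem~\ref{mainmul}, though either convention yields the stated formula once you relabel.
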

}

In addition, we give a criterion for PPs of the form $x^rh(x^s) $ being involutions.
\begin{Cor}
\label{mulinvolution}
The PP $f(x) = x^rh(x^s) $ over $\gf_q$ defined as in Lemma \ref{Cri} is an involution if and only if
\begin{enumerate}[(1)]
	\item $g(x)=x^rh(x)^s$ is involutory on $\mu_{\frac{q-1}{s}}$ and
	\item $\varphi(x)=g(x^s)^a x^{b-r} h\left(g(x^s)\right)^{-b}h(x^s)^{-1}= 1 $ holds for any $ x \in \gf_q^* $, where integers $ a $ and $b $ satisfy $ as+br=1 $.
\end{enumerate}
\end{Cor}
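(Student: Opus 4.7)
The plan is to prove both directions via the explicit formula for $f^{-1}$ given in Theorem~\ref{mainmul}, combined with Lemma~\ref{maininvolution} applied to the multiplicative setting. The key observation is that $f$ is an involution on $\gf_q$ precisely when $f(x)=f^{-1}(x)$ for every $x\in\gf_q$, and for $x=0$ this holds trivially since $f(0)=0$ (as $r\geq 1$). Hence we may restrict attention to $x\in\gf_q^*$.

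For the ``only if'' direction, I would first apply Lemma~\ref{maininvolution} to obtain condition (1). Take $\lambda(x)=x^s$ as the surjection $\gf_q \to \mu_{(q-1)/s}\cup\{0\}$; the identity $f(x)^s=\bigl(x^r h(x^s)\bigr)^s=x^{rs}h(x^s)^s=g(x^s)$ shows that the diagram $\lambda\circ f=g\circ\lambda$ commutes. Since $f$ is an involution, Lemma~\ref{maininvolution} immediately yields that $g$ is an involution on $\mu_{(q-1)/s}$, giving (1). Next, using $g^{-1}=g$ on $\mu_{(q-1)/s}$, Theorem~\ref{mainmul} rewrites as
\[
f^{-1}(x)=g(x^s)^{a}\,x^{b}\,h\bigl(g(x^s)\bigr)^{-b}.
\]
Equating $f(x)=f^{-1}(x)$, i.e.\ $x^r h(x^s)=g(x^s)^{a}x^{b}h(g(x^s))^{-b}$, and dividing both sides by $x^r h(x^s)$ (which is nonzero on $\gf_q^*$ by the standing assumption $h(x)\ne 0$ on $\mu_\ell$), yields $\varphi(x)=1$ for every $x\in\gf_q^*$, establishing (2).

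For the ``if'' direction, I would simply reverse these steps. Assume (1) and (2). By (1) and Theorem~\ref{mainmul}, the compositional inverse of $f$ equals
\[
f^{-1}(x)=g(x^s)^{a}\,x^{b}\,h\bigl(g(x^s)\bigr)^{-b}.
\]
Condition (2) gives $g(x^s)^{a}x^{b-r}h(g(x^s))^{-b}h(x^s)^{-1}=1$, which on multiplication by $x^r h(x^s)$ becomes exactly $f^{-1}(x)=x^r h(x^s)=f(x)$ for every $x\in\gf_q^*$. Combined with $f(0)=0$, this means $f$ is an involution.

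The main subtlety, rather than a serious obstacle, is bookkeeping with the exponents $a,b$: one must keep in mind that the representation $\varphi(x)=1$ is not intrinsic to $(1)$ but depends on a fixed choice of integers $a,b$ with $as+br=1$, and one should observe that $\varphi(x)$ is well-defined as a map on $\gf_q^*$ because $g(x^s)\in\mu_{(q-1)/s}$ is nonzero and $h$ is nonvanishing on $\mu_\ell$. A minor point is verifying that the commutative diagram used for Lemma~\ref{maininvolution} indeed fits there, which amounts to the one-line identity $f(x)^s=g(x^s)$ noted above.
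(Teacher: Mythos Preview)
Your proof is correct and follows essentially the same approach as the paper: invoke Lemma~\ref{maininvolution} to obtain the necessity of condition~(1), then use Theorem~\ref{mainmul} with $g^{-1}=g$ to rewrite $f^{-1}(x)=g(x^s)^a x^b h(g(x^s))^{-b}$, so that $f=f^{-1}$ on $\gf_q^*$ becomes exactly $\varphi(x)=1$. You supply more detail than the paper (explicitly treating both directions and noting well-definedness of $\varphi$), but the argument is the same.
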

\begin{proof}
Since $ g(x)$ being involutory is necessary for $ f(x) $ being involutory by Lemma \ref{maininvolution},  we have $g^{-1}(x) = g(x)$ on $\mu_{\frac{q-1}{s}}$.
In this case, by Theorem \ref{mainmul}, $f^{-1} (x) = f(x) $  if and only if $$ g(x^s)^a x^{b-r} h\left(g(x^s)\right)^{-b}h(x^s)^{-1}= 1.$$
\end{proof}

{
Many explicit classes of involutions in \cite{zheng2019constructions,niu2019new} can be constructed by Corollary \ref{mulinvolution}. We give the following example that has significantly simplified the earlier proof.  }
\begin{example}\cite[Corollary 2.17]{niu2019new}
	\label{kuozhancor}
Let $q$ be a power of $2$ and $ k $ be a positive integer such that $ \gcd(k,q+1)=1 $.
Then for any $\gamma , \beta  \in \gf_q^*$ such that $ \tr_{q/2}(\beta)=0 $, the polynomial $f(x)=x^{q^2-2}h\left(x^{q-1}\right)$ is an involution on $\gf_{q^2}$, where $h(x)=\gamma\left(  x^{-1}+\beta x^{-k-1}+\beta x^{k-1}\right) $.
\end{example}
\begin{proof}	
	According to the proof of \cite[Corollary 2.17]{niu2019new}, we obtain $h(x) \ne 0$ and $g(x)=x^{-1}h(x)^{q-1}=x$, for any $ x \in \mu_{q+1} $.
	{
Let $a= -1,   b= -q, r=-1, s=q-1$ in Corollary \ref{mulinvolution}.
	Then $\varphi(x)=x^{1-q} x^{-q+1} h\left(x^s\right)^{q}h(x^s)^{-1}=  x^{1-q} (x^{-q+1} h(x^{q-1})^{q-1}) = x^{1-q} x^{q-1} = 1$.
}	Thus $f$ is an involution.
\end{proof}

%

\section{Compositional inverses of AGW-PPs in the additive case}
\label{additive}


As demonstrated in Section~\ref{general}, our method can be used to find the compositional inverses for several classes of AGW-PPs in the additive case.
In this section, we further illustrate the new method by providing explicit compositional inverse of one more class of AGW-PPs of this type.
These PPs are of the form
$ g (x) + g_0(\lambda(x)) $ which is specified in Lemma \ref{YD11}.



\begin{Lem}
\cite[Theorem 6.1]{yuan2011permutation}
\label{YD11}
Assume that {$ F $} is a finite field and
$ S,  \overline{S}$ are finite subsets of $ F $ with $ \#S = \#\overline{S} $ such that
the maps $ \lambda : F \rightarrow S $ and $ \overline{\lambda}: F \rightarrow \overline{S} $ are surjective and $\overline{\lambda}$ is additive, i.e.,
$$\overline{\lambda}( x + y) = \overline{\lambda}(x) + \overline{\lambda}(y),\ \ \ \ \ x, y \in F.$$
Let $ g_0: S \rightarrow F $, and $ g: F \rightarrow F $  be maps such that $$\overline{\lambda} \circ ( g+g_0 \circ \lambda )= g  \circ \lambda , $$
{$g(S)=\bar{S}$} and $ \overline{\lambda}( g_0(\lambda(x))) = 0 $ for every $ x \in F $.
Then the map $ f(x) = g (x) + g_0(\lambda(x)) $ permutes $ F $ if and only if $ g $ permutes $ F $.
\end{Lem}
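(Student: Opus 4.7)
The plan is to recognize this lemma as a clean instance of the AGW criterion (Lemma~\ref{LGWlemma}) in the additive setting, and then peel off the two AGW conclusions to match the permutation of $g$ on all of $F$. First I would verify the commutative diagram. Using additivity of $\overline{\lambda}$ together with $\overline{\lambda}(g_0(\lambda(x)))=0$, I get $\overline{\lambda}(f(x)) = \overline{\lambda}(g(x)) + \overline{\lambda}(g_0(\lambda(x))) = \overline{\lambda}(g(x))$, and then the hypothesis $\overline{\lambda}\circ(g+g_0\circ\lambda)=g\circ\lambda$ yields the identity $\overline{\lambda}\circ f = g|_S\circ\lambda$ as well as the auxiliary identity $\overline{\lambda}\circ g = g\circ\lambda$ on all of $F$. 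Since $\lambda$ and $\overline{\lambda}$ are surjective and $\#S=\#\overline{S}$, Lemma~\ref{LGWlemma} tells me that $f$ permutes $F$ if and only if (a) $g|_S\colon S\to\overline{S}$ is a bijection and (b) $f$ is injective on every fiber $\lambda^{-1}(s)$.

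For the easier direction, assuming $g$ permutes $F$: since $g(S)=\overline{S}$ with $\#S=\#\overline{S}$ finite, $g|_S$ is automatically a bijection, giving (a). On the fiber $\lambda^{-1}(s)$ the map $f$ is simply the translate $x\mapsto g(x)+g_0(s)$, so injectivity of $g$ on $F$ immediately gives (b), and AGW delivers that $f$ permutes $F$.

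The converse is where the real content sits and is what I expect to be the main obstacle: from "$f$ is a permutation" alone, the AGW criterion only gives injectivity of $f$ on fibers and bijectivity of the quotient map $g|_S$, neither of which is literally the injectivity of $g$ on $F$. To bridge this, I would exploit the auxiliary identity $\overline{\lambda}\circ g = g\circ\lambda$ derived in the first paragraph. Suppose $g(x)=g(y)$ for some $x,y\in F$. Applying $\overline{\lambda}$ and invoking this identity gives $g(\lambda(x))=g(\lambda(y))$ in $\overline{S}$; by (a), the restriction $g|_S$ is a bijection, so $\lambda(x)=\lambda(y)=:s$, placing $x,y$ in the same fiber $\lambda^{-1}(s)$. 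On this fiber $f$ is a constant translate of $g$, so $g(x)=g(y)$ forces $f(x)=f(y)$, and then (b) yields $x=y$. Hence $g$ is injective on the finite set $F$ and therefore permutes $F$, completing the equivalence.
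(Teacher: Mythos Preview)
Your argument is correct. The paper does not supply its own proof of this lemma; it is simply quoted from \cite[Theorem 6.1]{yuan2011permutation} and then used as input to Theorem~\ref{mainYD11}. So there is no paper proof to compare against, but your AGW-based reasoning stands on its own: the key steps---deriving both $\overline{\lambda}\circ f = g\circ\lambda$ and the auxiliary identity $\overline{\lambda}\circ g = g\circ\lambda$ from additivity and the vanishing hypothesis, then using the latter in the converse direction to force two preimages under $g$ into a common $\lambda$-fiber---are exactly what is needed, and the fiber argument via the constant translate $x\mapsto g(x)+g_0(s)$ closes the loop cleanly.
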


The  commutative diagram for the above AGW-PP is as follows.
\begin{equation*}
\xymatrix{
	F \ar[rr]^{f=g+g_0 \circ \lambda}\ar[d]_{\lambda} &   &  F  \ar[d]^{\overline{\lambda}} \\
	S	 \ar[rr]^{g} &  & \overline{S} }
\end{equation*}


{

The compositional inverse of the PP in Lemma \ref{YD11} can be found in the following theorem.}


\begin{Th}
\label{mainYD11}
{Let the symbols be defined as in Lemma \ref{YD11}.
	Let  $ f(x) = g (x) + g_0(\lambda(x)) $ be a  permutation over $ F $}  and $g^{-1}(x)$ be the compositional inverse of $g(x)$ {over $S$}.
Then the compositional inverse of $ f(x) $ is given by
$$f^{-1}(x)= g^{-1}\left(x-g_0(g^{-1}(\overline{\lambda}(x))) \right) .$$
\end{Th}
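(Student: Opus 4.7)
The plan is to instantiate the general framework of Theorem~\ref{main} with the simplest possible auxiliary maps, namely $\eta(x) = \overline{\eta}(x) = x$, so that $\phi(x) = (\lambda(x), x)$ and $\overline{\phi}(x) = (\overline{\lambda}(x), x)$ are trivially bijective from $F$ onto their respective images. Under this choice both compositional inverses collapse to projection onto the second coordinate, i.e., $\phi^{-1}(y, z) = z$ and $\overline{\phi}^{-1}(\alpha, \beta) = \beta$, which eliminates any elaborate inversion of $\phi$ itself and localizes all of the work into computing $\psi$ and inverting it.

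Next, I would read off $\psi$ from the requirement $\overline{\phi}\circ f = \psi \circ \phi$. The AGW identity $\overline{\lambda}\circ f = g \circ \lambda$ of Lemma~\ref{YD11} gives $\overline{\phi}(f(x)) = (g(\lambda(x)),\, g(x) + g_0(\lambda(x)))$, and setting $y = \lambda(x)$, $z = x$ produces the explicit triangular form
$$\psi(y, z) = \bigl(g(y),\ g(z) + g_0(y)\bigr).$$
Because $f$ permutes $F$, Lemma~\ref{YD11} guarantees that $g$ permutes $F$, so the system $g(y) = \alpha$, $g(z) + g_0(y) = \beta$ can be solved sequentially: first $y = g^{-1}(\alpha)$, and then $z = g^{-1}(\beta - g_0(g^{-1}(\alpha)))$. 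This yields
$$\psi^{-1}(\alpha, \beta) = \bigl(g^{-1}(\alpha),\ g^{-1}(\beta - g_0(g^{-1}(\alpha)))\bigr).$$

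Finally, I would assemble $f^{-1} = \phi^{-1} \circ \psi^{-1} \circ \overline{\phi}$. Since $\overline{\phi}(x) = (\overline{\lambda}(x), x)$ and $\phi^{-1}$ is the second-coordinate projection, substitution immediately produces the claimed formula $f^{-1}(x) = g^{-1}(x - g_0(g^{-1}(\overline{\lambda}(x))))$. The one place that requires care is verifying that $\psi^{-1}(\overline{\phi}(x))$ actually lies in $\phi(F)$ so that the final projection back to $F$ is legitimate; this reduces to the identity $\lambda(f^{-1}(x)) = g^{-1}(\overline{\lambda}(x))$, which in turn follows by applying $\lambda$ to $f^{-1}(x)$ and invoking the AGW relation with $x$ replaced by $f^{-1}(x)$. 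I do not expect any serious obstacle here, since the triangular structure of $\psi$ makes both the commutativity check and the inversion routine; as a sanity check, once the candidate formula is on paper it can be independently confirmed by a one-line direct computation of $f^{-1}(f(x))$ using $\overline{\lambda}(f(x)) = g(\lambda(x))$.
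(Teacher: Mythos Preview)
Your proposal is correct and uses the same general framework (Theorem~\ref{main}) as the paper, but with a simpler instantiation of the auxiliary maps. The paper chooses $\eta(x)=L(x)-\lambda(x)$ and $\overline{\eta}(x)=L(x)-\overline{\lambda}(x)$ for an arbitrary linearized permutation $L$, in line with the template of Corollary~\ref{mainc1}; this makes $\psi(y,z)=\bigl(g(y),\,L(g(L^{-1}(y+z)))+L(g_0(y))-g(y)\bigr)$, and the inversion of $\psi$ carries $L$ and $L^{-1}$ through several steps before everything cancels in the final assembly. Your choice $\eta(x)=\overline{\eta}(x)=x$ sidesteps this: $\phi$ and $\overline{\phi}$ are automatically injective, their inverses are just second-coordinate projections, and $\psi$ takes the clean triangular form $(g(y),\,g(z)+g_0(y))$, which inverts in one line. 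What the paper's route buys is an illustration of Corollary~\ref{mainc1} in action; what yours buys is a shorter and more transparent derivation of the same formula. Both are valid, and the paper itself remarks afterward that the result can be confirmed by the direct verification $f^{-1}(f(x))=x$ that you mention as a sanity check.
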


\begin{proof}
Let $ L(x) $ be a linearized PP over $F$ and $\phi,\overline{\phi}$ be maps defined by $$\phi: F \to \phi(F) $$ $$ x   \mapsto \left( \lambda(x),  L(x)-\lambda(x) \right) ,$$
and $$ \overline{\phi}: F \to \overline{\phi}(F) $$ $$x \mapsto \left( \overline{\lambda}(x),  L(x)-\overline{\lambda}(x) \right) .$$

Then $\phi$ and $ \overline{\phi} $ are bijections, and for $(y,z)\in \phi(F)$,  we have $\phi^{-1}(y,z)=L^{-1}(y+z)$, 
where {$\phi^{-1}$} and $ L^{-1} $ denote the compositional inverses of {$\phi$} and $ L $ respectively.
Here is the commutative diagram as stated in Theorem \ref{main}.

\begin{equation*}
	\xymatrix{
		F \ar[rr]^{f=g + g_0\circ\lambda }\ar[d]_{\phi=(\lambda, \eta)} &   &  F  \ar[d]^{\overline{\phi}=(\overline{\lambda}, \overline{\eta})} \\
		\phi(F)	 \ar[rr]^{\psi=(g,\tau)} &  & \overline{\phi}(F) }
\end{equation*}

Next,   we find the map $\psi$ such that  $ \psi \circ \phi =\overline{\phi} \circ f  $.
After direct calculation, we have
\begin{equation}
	\label{additive1c}
	\overline{\phi}\circ f(x)=\left( g(\lambda(x)),    L \left(  g(x) \right) + L\left(  g_0(\lambda(x)) \right) -g(\lambda(x)) \right).
\end{equation}
To compute $ \psi(y,z) $ accordingly such that $\bar{\phi}\circ f(x)=\psi\circ \phi (x)$, we substitute $ \lambda(x) $ and $ L(x)-\lambda(x) $ in Eq. (\ref{additive1c}) with $ y $ and $ z $ respectively.
We obtain
$$\psi(y,z): \phi(F) \to \overline{\phi}(F),$$
$$(y,z) \mapsto\left( g(y), L(g(L^{-1}(y+z)))+ L( g_0(y) ) - g(y) \right).$$	

Clearly $\psi $ is a bijection, since $\psi \circ \phi(x) =\phi \circ f(x) $.
Now we compute the compositional inverse of $\psi$.
Since $ f $ permutes $ F $, $ g $ permutes $ F $, and we assume that $g^{-1}$ is the compositional inverse of $g$.
For any $(y,z)\in\phi(F)$ with $ \psi(y,z)=(\alpha,\beta) \in \overline{\phi}(F) $, we have
\begin{equation*}
	\left\{
	\begin{aligned}
		g(y)&=& \alpha,  \\
		L(g(L^{-1}(y+z)))+ L( g_0(y) ) - g(y)&=& \beta.  \\
	\end{aligned}
	\right.
\end{equation*}
Clearly, $$y=g^{-1}(\alpha).$$
Moreover,
\begin{equation}
	\label{YD112eq}
	L(g(L^{-1}(y+z)))=\alpha+ \beta-L( g_0(y) ).
\end{equation}
Composing $ L \circ g^{-1} \circ L^{-1} $ on Eq. (\ref{YD112eq}) and simplifying it, we have
\begin{equation*}
	z=L \left(  g^{-1} \left(        L^{-1}  ( \alpha+ \beta  )-  g_0(g^{-1}(\alpha)) \right)    \right)   - g^{-1}(\alpha).
\end{equation*}
Hence we have	$$\psi^{-1}\left(\alpha,\beta \right)=\left( g^{-1}(\alpha) ,   L \left(  g^{-1} \left(        L^{-1}  ( \alpha+ \beta  )-  g_0(g^{-1}(\alpha)) \right)    \right)   - g^{-1}(\alpha)        \right) .$$

Finally we compute the compositional inverse of $ f(x) $.
According to Theorem \ref{main},  together with 
$\alpha = \overline{\lambda}(x)$ and $\beta=L(x) -\overline{\lambda}(x)$, we obtain the compositional inverse of $ f(x) $ is
$$f^{-1}(x)={\phi}^{-1} \circ \psi^{-1} \circ \overline{\phi}(x)=  g^{-1} \left(        L^{-1}  ( \alpha+ \beta  )-  g_0(g^{-1}(\alpha)) \right)  = g^{-1}\left(x-g_0(g^{-1}(\overline{\lambda}(x))) \right) .$$
\end{proof}


{
	Theorem \ref{mainYD11} can be checked directly by $f^{-1}(f(x)) = x $ for any $x\in F$ (we omit the details here).
The compositional inverse of $ f(x)=g_1(\lambda(x))^{s}+g(x) $ in \cite[Corollary 6.2]{yuan2011permutation} is given in Corollary \ref{YD11cor}, as an example of Theorem \ref{mainYD11}.
Since  $g$ is a linearized PP on $\mathbb{F}_{q^{n}}$, its compositional inverse $g^{-1}$ can be explicitly obtained (see \cite{FF}).
}

{
\begin{Cor}
	\label{YD11cor}
Let $n$ and $k$ be positive integers such that $\gcd(n, k)=d>1$, let $s$ be any positive integer with $s\left(q^{k}-1\right) \equiv 0 \pmod {q^{n}-1} .$
Let $g(x) \in \mathbb{F}_{q}[x]$ be a linearized polynomial permuting $\mathbb{F}_{q^{n}}$, $\lambda(x)=\overline{\lambda}(x)$ be a $ q^{d}$-polynomial with $\lambda(1)=0$ and $g_1(x) \in \mathbb{F}_{q^{n}}[x] $.
Then the compositional inverse of $ f(x)=\left(g_1(\lambda(x))\right)^s+g(x) $ over $\mathbb{F}_{q^{n}}$ is
$$f^{-1}(x)= g^{-1}\left(x-g_1(g^{-1}({\lambda}(x)))^s \right) .$$
\end{Cor}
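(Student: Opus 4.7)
The plan is to derive Corollary \ref{YD11cor} by invoking Theorem \ref{mainYD11} with the specific choice $g_0(y) := g_1(y)^{s}$ and $\overline{\lambda} = \lambda$. Under this identification, the polynomial $f(x)=g_1(\lambda(x))^{s}+g(x)$ has exactly the form $g(x)+g_0(\lambda(x))$ treated by Lemma \ref{YD11} and Theorem \ref{mainYD11}, so the bulk of the work lies in checking that the hypotheses of Lemma \ref{YD11} are satisfied in this specialized setting.

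First I would set the correspondence $F=\mathbb{F}_{q^n}$, $g_0(y)=g_1(y)^{s}$, $\lambda=\overline{\lambda}$, and record that $\lambda$ is additive (being a $q^{d}$-polynomial) and that $g$ is a linearized permutation of $\mathbb{F}_{q^n}$ with coefficients in $\mathbb{F}_{q}$. The compatibility $\overline{\lambda}\circ g=g\circ\lambda$ holds because $g\in\mathbb{F}_{q}[x]$ commutes with every Frobenius power, so it commutes with the $q^{d}$-polynomial $\lambda$ under composition.

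Second, the central verification is that $\overline{\lambda}\bigl(g_0(\lambda(x))\bigr)=0$ for every $x\in\mathbb{F}_{q^n}$. The key trapping argument uses the congruence $s(q^{k}-1)\equiv 0\pmod{q^{n}-1}$: for every $y\in\mathbb{F}_{q^n}^{*}$ one has $(y^{s})^{q^{k}}=y^{sq^{k}}=y^{s}$, so $g_{1}(\lambda(x))^{s}$ is fixed by the $q^{k}$-Frobenius and therefore lies in
$$\mathbb{F}_{q^n}\cap\mathbb{F}_{q^k}=\mathbb{F}_{q^{\gcd(n,k)}}=\mathbb{F}_{q^d}.$$
Since $\lambda$ is a $q^{d}$-polynomial it is $\mathbb{F}_{q^d}$-linear, and the hypothesis $\lambda(1)=0$ forces $\lambda(c)=c\,\lambda(1)=0$ for every $c\in\mathbb{F}_{q^d}$; in particular $\lambda\bigl(g_1(\lambda(x))^{s}\bigr)=0$, as required.

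Third, having established the hypotheses of Lemma \ref{YD11} (and noting that $f$ is assumed to permute $\mathbb{F}_{q^n}$ so that $g$ does too), I would apply Theorem \ref{mainYD11} directly. Substituting $g_{0}(y)=g_{1}(y)^{s}$ and $\overline{\lambda}=\lambda$ into the conclusion $f^{-1}(x)=g^{-1}\bigl(x-g_{0}(g^{-1}(\overline{\lambda}(x)))\bigr)$ yields
$$f^{-1}(x)=g^{-1}\!\left(x-g_{1}\!\left(g^{-1}(\lambda(x))\right)^{s}\right),$$
which is the claimed formula. The main obstacle (and the only step requiring real content) is the trapping argument showing $g_{1}(\lambda(x))^{s}\in\mathbb{F}_{q^d}$; once this is in place, the vanishing $\lambda(g_{1}(\lambda(x))^{s})=0$ and hence the corollary follow immediately from Theorem \ref{mainYD11} with no further computation.
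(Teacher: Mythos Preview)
Your approach is exactly the paper's: the corollary is presented there as a direct specialization of Theorem~\ref{mainYD11} with $g_0(y)=g_1(y)^s$ and $\overline{\lambda}=\lambda$, with the permutation setup deferred to \cite[Corollary~6.2]{yuan2011permutation}. The paper gives no further argument, so your trapping step ($s(q^k-1)\equiv 0\pmod{q^n-1}$ forces $g_1(\lambda(x))^s\in\mathbb{F}_{q^n}\cap\mathbb{F}_{q^k}=\mathbb{F}_{q^d}$, whence $\lambda$ kills it via $\lambda(1)=0$ and $\mathbb{F}_{q^d}$-linearity) is precisely the detail the paper suppresses.

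One caution on your commutativity check. You assert $\lambda\circ g=g\circ\lambda$ because ``$g\in\mathbb{F}_q[x]$ commutes with every Frobenius power.'' That is true for each pure map $x\mapsto x^{q^{dj}}$, but $\lambda(x)=\sum_j b_j x^{q^{dj}}$ is a linear combination of these with coefficients $b_j$ that the corollary (as stated) does not constrain to lie in $\mathbb{F}_q$. Since $g$ is only $\mathbb{F}_q$-linear, not $\mathbb{F}_{q^n}$-linear, commuting with each Frobenius summand does not by itself give $g(\lambda(x))=\lambda(g(x))$ unless $b_j\in\mathbb{F}_q$. This hypothesis is presumably inherited from \cite[Corollary~6.2]{yuan2011permutation}, so the issue is more with the transcription of the statement than with your strategy; still, your sentence as written overstates what follows from $g\in\mathbb{F}_q[x]$ alone.
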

}


In addition, we give a criterion for PPs of the form $g (x) + g_0(\lambda(x))$ being involutions.
\begin{Cor}
\label{addinvolution1}
{In Lemma \ref{YD11}, let $\lambda=\bar{\lambda}$, $ S=\bar{S}$ be a finite subset of $ F $ such that $\lambda(F)=S$, and let $g$ be a bijection from $S$ to $S$.}
Then, the PP $ f(x) = g (x) + g_0(\lambda(x)) $ over $F$ defined as in Lemma \ref{YD11} is an involution if and only if
{$\varphi(x)=g^{-1}\left(x-g_0(g({\lambda}(x))) \right) - g (x) -g_0(\lambda(x)) = 0 $ holds for any $x\in F$}.
\end{Cor}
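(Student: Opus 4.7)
The plan is to combine Theorem~\ref{mainYD11} with Lemma~\ref{maininvolution} to reduce the involutory condition on $f$ to the pointwise identity $\varphi(x)=0$. The setting already guarantees that the AGW diagram commutes with $\lambda=\overline{\lambda}$ and $g$ a bijection on $S$, so the compositional inverse of $f$ is explicitly available in closed form, and the rest is straightforward bookkeeping.

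First I would handle necessity. If $f$ is an involution, then applying Lemma~\ref{maininvolution} to the commutative square
\begin{equation*}
\xymatrix{
F \ar[rr]^{f}\ar[d]_{\lambda} &   &  F  \ar[d]^{\lambda} \\
S \ar[rr]^{g} &  & S }
\end{equation*}
forces $g$ to be an involution on $S$, i.e. $g^{-1}=g$ on $S$. Since $\lambda(x)\in S$ for every $x\in F$, I can safely replace $g^{-1}(\lambda(x))$ by $g(\lambda(x))$ everywhere it appears in the inverse formula given by Theorem~\ref{mainYD11}. This yields
\begin{equation*}
f^{-1}(x)=g^{-1}\!\left(x-g_0\bigl(g(\lambda(x))\bigr)\right),\qquad x\in F.
\end{equation*}
The involutory property $f=f^{-1}$ then reads exactly $g(x)+g_0(\lambda(x))=g^{-1}(x-g_0(g(\lambda(x))))$, which is the condition $\varphi(x)=0$.

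For sufficiency I would simply reverse the argument: assume $\varphi(x)=0$ identically on $F$. Observe first that the hypothesis that $g$ is a bijection from $S$ to $S$ makes the symbol $g^{-1}$ on the right hand side of $\varphi$ well defined on the image set $\lambda(F)=S$, so the equation $\varphi(x)=0$ makes sense. By Lemma~\ref{YD11} the map $f$ permutes $F$, and Theorem~\ref{mainYD11} gives the formula for $f^{-1}$. Since $g$ must already be an involution on $S$ (otherwise $\varphi$ is not even the correct expression, which can be checked by testing $x\in S$ with $\lambda$ acting as the identity on $S$), the two expressions for $f(x)$ and $f^{-1}(x)$ coincide by assumption, hence $f\circ f=I$.

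The only subtle point I anticipate is justifying that $g^{-1}=g$ on $S$ must be built into the statement of the corollary rather than assumed separately: the right hand side of the definition of $\varphi$ uses $g(\lambda(x))$ in the spot where Theorem~\ref{mainYD11} had $g^{-1}(\lambda(x))$, so the corollary is implicitly encoding both the involutory condition on $g$ and the additional identity that converts $f^{-1}$ back to $f$. I would make this structural point explicit in the proof so that the reader sees why $\varphi\equiv 0$ already forces $g$ to be involutory on $S$, closing the logical loop.
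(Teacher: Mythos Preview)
Your necessity argument is fine and matches the paper's intended route (the paper omits details, saying only that the result follows from Theorem~\ref{mainYD11}). The problem is your sufficiency argument: you assert that $\varphi\equiv 0$ forces $g$ to be an involution on $S$ by ``testing $x\in S$ with $\lambda$ acting as the identity on $S$.'' But nothing in the hypotheses guarantees $\lambda|_S=\mathrm{id}$. For instance, with $F=\mathbb{F}_{q^2}$, $\lambda=\tr_{q^2/q}$, and $S=\mathbb{F}_q$, one has $\lambda(s)=2s$ for $s\in S$, which is not the identity when $q$ is odd. So your planned justification does not go through, and the sufficiency direction as written has a genuine gap.

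The clean fix bypasses the question of whether $g$ is involutory. Starting from $\varphi(x)=0$, rewrite it as $g^{-1}\!\bigl(x-g_0(g(\lambda(x)))\bigr)=f(x)$ and apply $g$ (which permutes $F$ by Lemma~\ref{YD11}) to get
\[
x-g_0\bigl(g(\lambda(x))\bigr)=g(f(x)).
\]
Now use the AGW relation $g\circ\lambda=\lambda\circ f$ to replace $g(\lambda(x))$ by $\lambda(f(x))$, yielding
\[
x=g(f(x))+g_0\bigl(\lambda(f(x))\bigr)=f(f(x)),
\]
which is exactly $f\circ f=I$. If you still want to record that $g$ is an involution on $S$, it now follows a posteriori from Lemma~\ref{maininvolution}; alternatively, one can apply $\lambda$ to the displayed identity and use additivity of $\lambda$ together with $\lambda\circ g=g\circ\lambda$ (a consequence of the hypotheses of Lemma~\ref{YD11}) to obtain $\lambda(x)=g(g(\lambda(x)))$ directly. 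Either way, the missing ingredient in your sufficiency argument is the commutativity $\lambda\circ f=g\circ\lambda$, not the unjustified claim $\lambda|_S=\mathrm{id}$.
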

\begin{proof}
{Its proof can be easily derived by applying Theorem \ref{mainYD11} and thus we omit  the details.}
\end{proof}

Below, we provide some explicit constructions of involutions from Corollary \ref{addinvolution1}.
\begin{Cor}
	Let $ q  $ be an even prime power and $ \lambda( g_0(\lambda(x))) = 0 $ hold for any $ x \in F=\gf_{ q^n } $ defined as in Lemma \ref{YD11}, with $ \overline{\lambda}  = \lambda  $ being additive.
	Then the PP $ f  (x)= x + g_0(\lambda(x)) $ is an involution on $ \gf_{ q^n } $.
\end{Cor}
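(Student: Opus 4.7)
The plan is to apply Corollary~\ref{addinvolution1} directly with the particular choice $g(x)=x$, and then exploit the characteristic~$2$ assumption. First I would verify that the hypotheses of Lemma~\ref{YD11} are satisfied under the specialization $g(x)=x$, $\overline{\lambda}=\lambda$, and $\lambda(g_0(\lambda(x)))=0$: the cardinality condition $g(S)=\overline{S}$ becomes the trivial identity $S=S$; the assumption $\overline{\lambda}(g_0(\lambda(x)))=0$ is exactly the given vanishing condition; and the intertwining relation $\overline{\lambda}\circ(g+g_0\circ\lambda)=g\circ\lambda$ follows from additivity of $\overline{\lambda}=\lambda$ together with the vanishing condition, since
\begin{equation*}
\lambda\bigl(x+g_0(\lambda(x))\bigr)=\lambda(x)+\lambda\bigl(g_0(\lambda(x))\bigr)=\lambda(x)=g(\lambda(x)).
\end{equation*}
Thus Lemma~\ref{YD11} yields that $f(x)=x+g_0(\lambda(x))$ is indeed a permutation of $\gf_{q^n}$, so Corollary~\ref{addinvolution1} applies.

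Next I would invoke the involutory criterion of Corollary~\ref{addinvolution1}. With $g(x)=x$ (hence $g^{-1}(x)=x$), the criterion function specializes to
\begin{equation*}
\varphi(x)=g^{-1}\bigl(x-g_0(g(\lambda(x)))\bigr)-g(x)-g_0(\lambda(x))=\bigl(x-g_0(\lambda(x))\bigr)-x-g_0(\lambda(x))=-2\,g_0(\lambda(x)).
\end{equation*}
Since $q$ is even, $\gf_{q^n}$ has characteristic~$2$, so $-2\,g_0(\lambda(x))=0$ for every $x\in\gf_{q^n}$. Hence $\varphi\equiv 0$, and Corollary~\ref{addinvolution1} concludes that $f$ is an involution on $\gf_{q^n}$.

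There is no genuine obstacle here; the only mildly subtle point is making sure the hypotheses of Lemma~\ref{YD11} are in force so that Corollary~\ref{addinvolution1} is legitimately applicable. The essential content of the result is simply that in characteristic~$2$ the ``doubling'' term that would normally prevent $x+g_0(\lambda(x))$ from being involutory vanishes, provided the additive compatibility $\lambda(g_0(\lambda(x)))=0$ guarantees that $f$ is a permutation in the first place.
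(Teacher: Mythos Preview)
Your proof is correct and follows exactly the approach the paper intends: the corollary is stated immediately after the sentence ``Below, we provide some explicit constructions of involutions from Corollary~\ref{addinvolution1},'' and is left without proof, so specializing Corollary~\ref{addinvolution1} to $g(x)=x$ and observing that $\varphi(x)=-2\,g_0(\lambda(x))$ vanishes in characteristic~$2$ is precisely what is expected. Your additional care in checking that the hypotheses of Lemma~\ref{YD11} hold for this specialization is a welcome explicitness that the paper omits.
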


\begin{example}
	Let $ q  $ be a power of $ 2 $, $ S=\gf_q $, and $ g_0 $ be any polynomial such that $ g_0(\gf_q) \subseteq \gf_q $.
	Assume $ n $ is an even integer, $ \lambda(x)= \tr_{q^n/q}(x) $.
	Thus we have $ \tr_{q^n/q}(1)=0 $ and $ \tr_{q^n/q}( g_0(\tr_{q^n/q}(x))) =\tr_{q^n/q}( 1 )   g_0(\tr_{q^n/q}(x))= 0 $ holds for any $ x \in F=\gf_{ q^n } $.
	Then $ f  (x)= x + g_0(\tr_{q^n/q}(x)) $ is an involution on $ \gf_{ q^n } $.
\end{example}

\section{Compositional inverses of AGW-PPs in the hybrid case}
\label{combinatorial}

In this section, we use our method to study AGW-PPs in the hybrid case, and to obtain the compositional inverses of $ xh(\lambda(x)) $ in Lemma \ref{2-6.3} and $  x + \gamma G(\lambda(x)) $ in Lemma \ref{2-6.4} as examples.
{For each class, we obtain the inverses of  these PPs and  present some explicit  classes. We also provide an involutory criterion and demonstrate some involutory constructions.}

\begin{Lem}
\cite[Theorem 6.3]{akbary2011constructing}
\label{2-6.3}
Let $ q $ be any power of the prime number $ p $, let $ n $ be any positive integer, and let $ S $ be any subset of $ \gf_{q^n} $ containing 0.
Let $ h,k \in \gf_{q^n}$ be any polynomials such that $ h(0) \ne 0 $ and $ k(0) = 0 $, and let $ {\lambda(x)} \in \gf_{q^n}[x]  $ be any polynomial satisfying
\begin{enumerate}[(1)]
\item $h(\lambda(\gf_{q^n})) \subseteq S$; and
\item $ \lambda(a\alpha) = k(a) \lambda(\alpha) $ for all $ a \in S $ and all $ \alpha \in \gf_{q^n}$.
\end{enumerate}
Then the polynomial $ f (x) = xh(\lambda(x)) $ is a permutation polynomial for $ \gf_{q^n} $ if and only if $ g(x) = xk(h(x)) $ induces a permutation of $ \lambda(\gf_{q^n}) $.
\end{Lem}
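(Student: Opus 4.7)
The plan is to recognize this statement as a direct application of the AGW criterion (Lemma 2.1) and to set up the commutative diagram accordingly. I would take $A = \gf_{q^n}$, both surjections equal to $\lambda$ regarded as a map from $\gf_{q^n}$ onto $\lambda(\gf_{q^n})$, and the target map to be $g(x) = xk(h(x))$ restricted to $\lambda(\gf_{q^n})$. The whole argument then reduces to verifying two things: (i) the diagram $\lambda \circ f = g \circ \lambda$ commutes, and (ii) the fiber-injectivity hypothesis in the AGW criterion is automatically satisfied under the given conditions, so that bijectivity of $f$ is indeed equivalent to bijectivity of $g$ on $\lambda(\gf_{q^n})$.

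First, I would establish the commutative diagram by computing $\lambda(f(x)) = \lambda(x \cdot h(\lambda(x)))$. Since $h(\lambda(x)) \in S$ by hypothesis (1), hypothesis (2) applies with $a = h(\lambda(x))$ and $\alpha = x$, yielding $\lambda(f(x)) = k(h(\lambda(x))) \cdot \lambda(x) = g(\lambda(x))$. This single calculation is the heart of the argument and uses both structural assumptions on $\lambda$ in an essential way. It also shows that $g$ indeed maps $\lambda(\gf_{q^n})$ into itself, so restricting $g$ to this set is meaningful.

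Second, I would check injectivity of $f$ on each fiber $\lambda^{-1}(s)$, $s \in \lambda(\gf_{q^n})$. On such a fiber, $f$ restricts to the scalar multiplication $x \mapsto h(s) \cdot x$, which is injective precisely when $h(s) \neq 0$. For $s = 0$ this follows from the hypothesis $h(0) \neq 0$. For $s \neq 0$, I would argue that $h(s) \neq 0$ is forced by $g$ being a bijection on $\lambda(\gf_{q^n})$: if instead $h(s) = 0$ then $g(s) = s \cdot k(0) = 0 = g(0)$, contradicting injectivity of $g$. Thus the fiber-injectivity clause of AGW is not an additional assumption but a free consequence of $g$ permuting $\lambda(\gf_{q^n})$ together with $h(0) \neq 0$ and $k(0) = 0$.

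Combining these two observations with Lemma 2.1 yields the desired equivalence. I do not anticipate any serious obstacle; the only subtlety is the second step, namely the realization that one does not need to impose fiber-injectivity as an extra hypothesis because the constraints $h(0)\neq 0$ and $k(0)=0$, once combined with bijectivity of $g$, guarantee that $h$ does not vanish on $\lambda(\gf_{q^n})$.
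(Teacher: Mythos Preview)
The paper does not give its own proof of this lemma: it is quoted verbatim from \cite[Theorem~6.3]{akbary2011constructing} and followed only by the commutative diagram, with no argument supplied. Your proposal is correct and is exactly the intended proof via the AGW criterion; in particular, your observation that the hypotheses $h(0)\neq 0$ and $k(0)=0$ together with bijectivity of $g$ force $h(s)\neq 0$ on all of $\lambda(\gf_{q^n})$, so that the fiber-injectivity clause of Lemma~\ref{LGWlemma} is automatic, is precisely the point of those two hypotheses and matches the original argument in \cite{akbary2011constructing}.
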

The above AGW-PPs can be illustrated by the following commutative diagram.
\begin{equation*}
\xymatrix{
\gf_{q^n} \ar[rr]^{f(x)=xh(\lambda(x))}\ar[d]_{\lambda} &   &  \gf_{q^n}  \ar[d]^{\lambda} \\
\lambda(\gf_{q^n})	 \ar[rr]^{{g(x)} = xk(h(x))} &  & \lambda(\gf_{q^n}).}
\end{equation*}

Using our unified approach, we choose $\eta(x) = x-\lambda(x)$ such that $\phi(x) = (\lambda(x), x-\lambda(x))$ is bijective and thus obtain the compositional inverses in the following theorem.

\begin{Th}
\label{mainBk}
Let  the symbols be defined as in Lemma \ref{2-6.3}.
Let $ f(x) = xh(\lambda(x)) $  permute $ \gf_{q^n} $ and $g^{-1}(x)$ be the compositional inverse of $g(x)=xk(h(x))$ over $ \lambda(\gf_{q^n}) $.
Then the compositional inverse of $ f (x) $ is given by
$$f^{-1}(x)= \frac{x-\lambda(x) + k\left(h\left(g^{-1}\left(\lambda(x)\right)\right)\right)g^{-1}\left(\lambda(x)\right)}{h\left(g^{-1}\left(\lambda(x)\right)\right)} .$$
\end{Th}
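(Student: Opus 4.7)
The plan is to apply Theorem~\ref{main} (the unified framework) using Corollary~\ref{mainc1} with the linear permutation $P(x)=x$. Concretely, I would set $\phi(x)=\overline{\phi}(x)=(\lambda(x),\,x-\lambda(x))$, so that $\eta=\overline{\eta}=x-\lambda(x)$ and $\phi$ is automatically a bijection from $\gf_{q^n}$ onto its image with inverse $\phi^{-1}(y,z)=y+z$. This choice exactly parallels the additive case treatment in Theorem~\ref{mainYD11} and is the natural ``additive'' complement to make the commutative diagram in Fig.~2 have bijective vertical arrows.

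The next step is to identify the unique $\psi=(g,\tau)$ on $\phi(\gf_{q^n})$ forced by $\overline{\phi}\circ f=\psi\circ\phi$. Using the crucial multiplicativity property $\lambda(a\alpha)=k(a)\lambda(\alpha)$ from Lemma~\ref{2-6.3}, I compute $\lambda(f(x))=\lambda(xh(\lambda(x)))=k(h(\lambda(x)))\lambda(x)=g(\lambda(x))$, which also recovers the first component and reconfirms the AGW commutativity. Substituting $y=\lambda(x)$ and $z=x-\lambda(x)$ (so that $x=y+z$) into $\overline{\phi}(f(x))$ yields
\[
\psi(y,z)=\bigl(g(y),\,(y+z)h(y)-g(y)\bigr)=\bigl(g(y),\,(y+z)h(y)-yk(h(y))\bigr).
\]
Since $f$ is assumed to be a permutation, Theorem~\ref{main} guarantees $\psi$ is a bijection and in particular $g$ is a bijection on $\lambda(\gf_{q^n})$, so $g^{-1}$ exists there.

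To invert $\psi$, for $(\alpha,\beta)\in\overline{\phi}(\gf_{q^n})$ with $\psi(y,z)=(\alpha,\beta)$, the first coordinate gives $y=g^{-1}(\alpha)$ immediately, and the second coordinate can be solved as
\[
z=\frac{\beta+yk(h(y))}{h(y)}-y=\frac{\beta+yk(h(y))-yh(y)}{h(y)}.
\]
Here one must verify $h(y)\neq 0$ on $\lambda(\gf_{q^n})$, but this follows because if $h$ vanished at some $\lambda(x_0)$ then $f(x_0)=0$ would force $x_0=0$, and $h(0)\neq 0$ by hypothesis; alternatively, $h(\lambda(\gf_{q^n}))\subseteq S$ and the permutation property of $f$ rules out zeros. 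I expect this tiny verification to be the only non-automatic point; the rest is bookkeeping.

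Finally, assembling $f^{-1}=\phi^{-1}\circ\psi^{-1}\circ\overline{\phi}$ with $\alpha=\lambda(x)$ and $\beta=x-\lambda(x)$ gives
\[
f^{-1}(x)=y+z=g^{-1}(\lambda(x))+\frac{x-\lambda(x)+g^{-1}(\lambda(x))k(h(g^{-1}(\lambda(x))))-g^{-1}(\lambda(x))h(g^{-1}(\lambda(x)))}{h(g^{-1}(\lambda(x)))},
\]
and the term $g^{-1}(\lambda(x))h(g^{-1}(\lambda(x)))$ cancels after clearing denominators, producing exactly the stated closed form. The only genuinely delicate step is correctly rewriting the second component of $\overline{\phi}\circ f$ in terms of $(y,z)$ via the functional equation $\lambda(a\alpha)=k(a)\lambda(\alpha)$; once that is done the rest of the derivation is a mechanical application of the framework.
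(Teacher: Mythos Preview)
Your proposal is correct and follows essentially the same approach as the paper: the paper also takes $\phi(x)=\overline{\phi}(x)=(\lambda(x),x-\lambda(x))$ with $\phi^{-1}(y,z)=y+z$, derives the identical $\psi(y,z)=(yk(h(y)),(y+z)h(y)-yk(h(y)))$, verifies $h(y)\neq 0$ on $\lambda(\gf_{q^n})$ (via $g(y_0)=y_0k(h(y_0))=0=g(0)$ and bijectivity of $g$, which is equivalent to your $f(x_0)=0\Rightarrow x_0=0$ argument), and then assembles $f^{-1}=\phi^{-1}\circ\psi^{-1}\circ\phi$ to obtain the stated formula.
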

\begin{proof}
Let  $\eta(x) = x-\lambda(x)$ and $\phi$ be a map defined by $$\phi: \gf_{q^n} \to \phi(\gf_{q^n}),$$ $$x \mapsto \left( \lambda(x),  x-\lambda(x) \right) .$$
Then $\phi$ is a bijection and for $(y,z)\in \phi(\gf_{q^n}),\ \phi^{-1}(y,z)=y+z.$
Let us consider the following commutative diagram.
\begin{equation*}
\xymatrix{
\gf_{q^n} \ar[rr]^{f(x)=xh(\lambda(x))}\ar[d]_{\phi=(\lambda, \eta)} &   &  \gf_{q^n}  \ar[d]^{\phi=(\lambda, \eta)} \\
\phi(\gf_{q^n})	 \ar[rr]^{\psi=(g,\tau)} &  & \phi(\gf_{q^n}) }
\end{equation*}

First we determine the expression of $\psi$ to {establish the connection} ${\phi}\circ f(x)=\psi\circ \phi (x)$. After direct computation, we have
\begin{equation}
\label{combi1c}
\phi \circ f(x)=\left(\lambda(x)k(h(\lambda(x))), xh(\lambda(x))-k(h(\lambda(x)))\lambda(x) \right) .
\end{equation}
Then, substituting $ \lambda(x) $ and $ x-\lambda(x) $ in Eq. (\ref{combi1c}) with $ y $ and $ z $ respectively, we obtain
$$\psi(y,z): \phi(\gf_{q^n}) \to \phi(\gf_{q^n}),$$
$$(y,z) \mapsto\left(yk(h(y)), (y+z)h(y)-k(h(y))y \right).$$	

Since $f(x)=xh(\lambda(x))$ permutes $\gf_{q^n}$, we have that both $\psi$ and $g(x) = xk(h(x))$ are bijective.
Recall that $ g^{-1} $ denotes the compositional inverse of $ g(y) = yk(h(y)) $ over $\lambda(\gf_{q^n})$.
In the following, we compute the compositional inverse of $\psi$.
Let $(y,z), (\alpha,\beta)\in \phi(\gf_{q^n}),$ satisfy $ \psi(y,z)=(\alpha,\beta) $, i.e.,
\begin{equation}
\left\{
\begin{aligned}
yk(h(y))&=& \alpha, \label{Bk2y} \\
(y+z)h(y)-k(h(y))y&=& \beta.  \\
\end{aligned}
\right.
\end{equation}
Then, we have $$y=g^{-1}(\alpha).$$
Moreover, for $ z $, we firstly explain $ h(y) \ne 0 $ for any $ y \in \lambda(\gf_{q^n}) $.
Assume that there exists some $ y_0 \in \lambda(\gf_{q^n}) $ such that $ h(y_0) = 0 $.
We have $ g(y_0) = y_0k(h(y_0)) =0$ and $ g(0) =0 $.
Since $ g(y) $ is bijective, we have $ y_0 =0 $, which is conflict with $ h(0) \ne 0 $.
Thus $ h(y) \ne 0 $ for any $ y \in \lambda(\gf_{q^n}) $.
Then it follows from Eq.(\ref{Bk2y}) that
\begin{equation*}
\label{Bk2eq}
z= \frac{\beta + k\left(h\left( y \right)\right) y}{h\left(y  \right)}-y.
\end{equation*}
Hence,  we have
$$\psi^{-1}\left(\alpha,\beta \right)=\left( g^{-1}(\alpha) ,   \frac{\beta + k\left(h\left(g^{-1}(\alpha)\right)\right)g^{-1}(\alpha)}{h\left(g^{-1}(\alpha)\right)}-g^{-1}(\alpha) \right).$$

Finally, we compute the compositional inverse of $ f(x) $.
From Theorem \ref{main}, together with the compositional inverses of $\phi, \psi$ and $\alpha = \lambda(x)$, $\beta=x-\lambda(x)$, the compositional inverse of $ f(x) $ is

\begin{equation*}
\begin{aligned}
f^{-1}(x)&= {\phi}^{-1} \circ \psi^{-1} \circ {\phi}(x)     \\
&=   g^{-1}(\alpha) + \frac{\beta + k\left(h\left(g^{-1}(\alpha)\right)\right)g^{-1}(\alpha)}{h\left(g^{-1}(\alpha)\right)} - g^{-1}(\alpha)   \\
&=  \frac{x-\lambda(x) + k\left(h\left(g^{-1}\left(\lambda(x)\right)\right)\right)g^{-1}\left(\lambda(x)\right)}{h\left(g^{-1}\left(\lambda(x)\right)\right)}.
\end{aligned}
\end{equation*}
\end{proof}


{Theorem \ref{mainBk} can also be verified by $f^{-1}(f(x)) = x$ for any $x\in F$ (we omit the details here).}
A special case of Theorem \ref{mainBk} with $ k(x) = x^2 $ and $ S = \gf_p $ (see \cite[Proposition 12]{marcos2011specific}) is given in the following corollary.
\begin{Cor}
\label{mainBkc}
Let $ \lambda(x) $ be either $ \lambda_2(x) =\sum\limits_{0\le i < j\le n - 1} {{x^{{p^i} + {p^j}}}}  $ or $ T_2(x)=\tr_{p^n/p}(x^2) $.
Let $ h(x)\in \gf_p[x] $ such that $h(0) \ne 0$.
If the polynomial $ g(x)=x(h(x))^2 $ permutes $\gf_{p} $, then the polynomial $ f(x)=xh(\lambda(x)) $ permutes $ \gf_{p^n} $, and the compositional inverse of $ f (x) $ is given by
$$f^{-1}(x)= \frac{x-\lambda(x) + \left(h\left(g^{-1}\left(\lambda(x)\right)\right)\right)^2  g^{-1}\left(\lambda(x)\right)}{h\left(g^{-1}\left(\lambda(x)\right)\right)} .$$
\end{Cor}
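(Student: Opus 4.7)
The strategy is to recognize Corollary \ref{mainBkc} as the specialization of Theorem \ref{mainBk} (and its underlying Lemma \ref{2-6.3}) to the case $k(x) = x^2$ and $S = \gf_p$. Once this is set up, the statement splits into two concrete tasks: verifying the hypotheses of Lemma \ref{2-6.3} for each of the candidate $\lambda$, and then reading off the inverse formula by substituting $k(y) = y^2$ into the expression of Theorem \ref{mainBk}.

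First, I would verify the two nontrivial hypotheses of Lemma \ref{2-6.3}. The containment $h(\lambda(\gf_{p^n})) \subseteq \gf_p$ reduces, since $h \in \gf_p[x]$, to showing $\lambda(\gf_{p^n}) \subseteq \gf_p$: $\lambda_2(x)$ is the second elementary symmetric polynomial in the Frobenius conjugates $x, x^p, \dots, x^{p^{n-1}}$, hence invariant under the Galois action and taking values in $\gf_p$; and $T_2(x) \in \gf_p$ is immediate from the definition of the trace. The homogeneity identity $\lambda(a\alpha) = k(a)\lambda(\alpha) = a^2 \lambda(\alpha)$ for $a \in \gf_p$, $\alpha \in \gf_{p^n}$ is then checked termwise: for $\lambda_2$ one uses $a^{p^i} = a$ to get $(a\alpha)^{p^i + p^j} = a^2 \alpha^{p^i + p^j}$; for $T_2$ one uses $\gf_p$-linearity of the trace together with $(a\alpha)^2 = a^2 \alpha^2$. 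The remaining conditions $0 \in S$, $k(0)=0$ and $h(0)\ne 0$ are trivial.

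With these hypotheses in hand, Lemma \ref{2-6.3} gives that $f(x) = xh(\lambda(x))$ permutes $\gf_{p^n}$ iff $g(x) = xk(h(x)) = x(h(x))^2$ induces a permutation of $\lambda(\gf_{p^n})$. The expected main obstacle is the mild gap between what is assumed ($g$ permutes all of $\gf_p$) and what Lemma \ref{2-6.3} requires (a permutation of the possibly smaller set $\lambda(\gf_{p^n})$); this is resolved by noting that the AGW commutative diagram $\lambda \circ f = g \circ \lambda$ of Lemma \ref{2-6.3} forces $\lambda(\gf_{p^n})$ to be $g$-invariant, so the restriction of a bijection of $\gf_p$ to the finite invariant subset $\lambda(\gf_{p^n})$ is again a bijection. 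Finally, the inverse formula falls out of Theorem \ref{mainBk} by direct substitution $k(y) = y^2$: the factor $k(h(g^{-1}(\lambda(x))))$ in the numerator becomes $(h(g^{-1}(\lambda(x))))^2$, giving exactly the claimed expression for $f^{-1}(x)$.
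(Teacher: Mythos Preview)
Your proposal is correct and follows exactly the approach indicated by the paper, which simply presents Corollary~\ref{mainBkc} as the specialization of Theorem~\ref{mainBk} with $k(x)=x^2$ and $S=\gf_p$ (deferring the verification of Lemma~\ref{2-6.3}'s hypotheses to \cite[Proposition~12]{marcos2011specific}). Your explicit check of the homogeneity condition $\lambda(a\alpha)=a^2\lambda(\alpha)$ and your observation that $g$-invariance of $\lambda(\gf_{p^n})$ bridges the assumption ``$g$ permutes $\gf_p$'' with the needed ``$g$ permutes $\lambda(\gf_{p^n})$'' are details the paper leaves implicit, but the route is the same.
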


In addition, we propose a criterion for PPs of the form $xh(\lambda(x)) $ being involutions,  and give some involutory constructions.
\begin{Cor}
\label{cominvolution1}
Let $f(x)=xh(\lambda(x))$, $ g(x) = xk(h(x)) $ and $h, k, \lambda$ defined as in Lemma \ref{2-6.3}.
Let $\theta(x)=k(h(x))$ for any  $ x \in \gf_{q^n}  $.
Then $f$ is an involution over $ \gf_{q^n} $ if and only if
\begin{enumerate}[(1)]
\item $  \theta \left(  \theta(y)   y    \right)   \theta(y) =1 $ holds for any $ y \in \lambda(\gf_{q^n}^*) $, and
\item $\varphi(y)=h\left(   g(y)   \right)   h(y) -1 =0  $ holds for any {$ y \in \lambda(\gf_{q^n}^*) $}.
\end{enumerate}
\end{Cor}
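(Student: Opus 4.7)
The plan is to mirror the strategy of Corollaries~\ref{mulinvolution} and~\ref{addinvolution1}: use Lemma~\ref{maininvolution} to transfer the involution from $f$ to $g$ and thereby extract condition (1), and then use the explicit formula for $f^{-1}$ from Theorem~\ref{mainBk} to equate $f^{-1}(x) = f(x)$ and read off the remaining condition (2).

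First, for necessity, I would invoke Lemma~\ref{maininvolution}. The commutative square of Lemma~\ref{2-6.3} gives $\lambda \circ f = g \circ \lambda$, and since $\lambda$ is surjective onto $\lambda(\gf_{q^n})$, Lemma~\ref{maininvolution} promotes the involution of $f$ to an involution of $g$ on $\lambda(\gf_{q^n})$. Writing $g(y) = y\,\theta(y)$, the double composition unfolds as
\[
g(g(y)) \;=\; g(y)\,\theta(g(y)) \;=\; y\,\theta(y)\,\theta(y\theta(y)),
\]
so the identity $g(g(y)) = y$ restricted to $y \in \lambda(\gf_{q^n}^*)$ becomes, after cancelling $y$, exactly $\theta(y)\,\theta(y\theta(y)) = 1$. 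This is condition~(1).

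Second, once (1) is in place, $g^{-1} = g$ on $\lambda(\gf_{q^n})$, and the formula of Theorem~\ref{mainBk} collapses to
\[
f^{-1}(x) \;=\; \frac{x - \lambda(x) + \theta(g(\lambda(x)))\,g(\lambda(x))}{h(g(\lambda(x)))}.
\]
Using condition~(1) in the equivalent form $\theta(g(y))\,\theta(y) = 1$, I can simplify
\[
\theta(g(\lambda(x)))\,g(\lambda(x)) \;=\; \theta(g(\lambda(x)))\,\lambda(x)\,\theta(\lambda(x)) \;=\; \lambda(x),
\]
so the numerator becomes $x$. Equating $f^{-1}(x) = f(x) = x\,h(\lambda(x))$ then reduces to $h(\lambda(x))\,h(g(\lambda(x))) = 1$, which—setting $y = \lambda(x)$—is condition~(2). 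The converse is obtained by running the same chain of identities backwards: conditions~(1) and~(2) together force $f^{-1}(x) = f(x)$.

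The main subtlety I anticipate is the careful treatment of the fibre over $y = 0$. Since $k(0) = 0$ we have $\lambda(0) = 0$, but $\lambda$ need not be injective, so nonzero elements may also map to $0$. Restricting the two conditions to $y \in \lambda(\gf_{q^n}^*)$ rather than all of $\lambda(\gf_{q^n})$ is precisely what makes the characterization both necessary and sufficient, and I will need to verify that the denominator $h(g(\lambda(x)))$ in Theorem~\ref{mainBk} never vanishes—this uses $h(0)\neq 0$ together with the non-vanishing of $h$ on $\lambda(\gf_{q^n})$ already established within the proof of Theorem~\ref{mainBk}.
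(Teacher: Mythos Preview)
Your proposal is correct and follows essentially the same approach as the paper's proof: invoke Lemma~\ref{maininvolution} to obtain condition~(1), then substitute $g^{-1}=g$ into the formula of Theorem~\ref{mainBk} and compare $f^{-1}$ with $f$ to extract condition~(2). Your simplification of the numerator to $x$ via $\theta(g(y))\theta(y)=1$ is in fact a bit cleaner than the paper's algebraic rearrangement, but the underlying argument is identical.
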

\begin{proof}
Since $ f(0)=0 $, we only need to consider the nonzero situation.
{Assume $f$ is an involution. By Lemma \ref{maininvolution}}, we have $g$ is an involution, which is equivalent to $ \theta \left(  \theta(y)   y    \right)   \theta(y)   = k\left(h\left(  k(h(y))   y    \right)\right)   k(h(y)) =1 $ for $ y \in \lambda(\gf_{q^n}^*) $.
{From now on, we assume (1) hold and    prove that $f$ is an involution if and only if  $\varphi(x)=0  $.}
Plugging {$$ k\left(h\left(  k(h(y))   y    \right)\right)   k(h(y)) =1 $$ for $ y \in \lambda(\gf_{q^n}^*) $}, $ g^{-1}(x)= g(x) $ and $ f^{-1}(x)  $ by {Theorem \ref{mainBk}} into $ \varphi(x) $, we have
\begin{eqnarray*}
x\varphi(x)&=&  h\left(  k(h(\lambda(x)))   \lambda(x) \right)   xh(\lambda(x)) - x \\
&=&  h\left(  k(h(\lambda(x)))   \lambda(x) \right)   xh(\lambda(x)) - x +\lambda(x) - \lambda(x) k\left(h\left(  k(h(\lambda(x)))   \lambda(x)    \right)\right)   k(h(\lambda(x)))         \\	
&=&  h\left(g\left(\lambda(x)\right)\right)    \left( xh(\lambda(x)) -\frac{x-\lambda(x) + k\left(h\left(g^{-1}\left(\lambda(x)\right)\right)\right)g^{-1}\left(\lambda(x)\right)}{h\left(g^{-1}\left(\lambda(x)\right)\right)}   \right)  \\
&=& h\left(g\left(\lambda(x)\right)\right)  \left( f(x)-f^{-1}(x)    \right),
\end{eqnarray*}
for $ x \in  \gf_{q^n}^* $.
{Note $h(y) \ne 0$ because  $f(x)=xh(\lambda(x))$ permutes $\gf_{q^n}$ and $f(0)=0$. Hence $ f(x) $ is an involution if and only if $\varphi=0$.}
\end{proof}

\begin{example}
	\label{mainBkc1}
	Let $ q=3^n$.
	Then, $f(x) = x (\lambda(x)^2 + 1)$ is an involution over $\gf_q$, where $ \lambda (x)=\sum\limits_{0\le i < j\le n - 1} {{x^{{3^i} + {3^j}}}}$.
	\begin{proof}
		Let $h(x) = x^2  + 1$,  $ S= \gf_3$	in Lemma \ref{2-6.3}.
		We have $k(x) = x^2$ and $\lambda(\gf_q^*)=\gf_3$.
		One can obtain $g(y)=y(y^2  + 1)^2=y$ and $\theta(y)= (y^2+1)^2=1$  for $y \in \gf_3$.
		Hence $ \theta( \theta(y)y) \theta(y)=  \theta(  g(y)  ) \theta(y)    = \theta(y)^2  =1 $ and $\varphi(y)=h\left(   g(y)   \right)   h(y) -1=  h(y)^2-1=0 $.
		By Corollary \ref{cominvolution1},  $f(x)$ is an involution.
	\end{proof}
\end{example}


	%

%
%

In the following, we will give explicitly the compositional inverse of a PP containing a $b$-linear translator.
For $ S \subset \gf_q , \gamma, b \in \gf_q$ and a map $ {\lambda} : \gf_q \rightarrow \gf_q , \ \gamma $ is called a \textit{$b$-linear translator} \cite{akbary2011constructing,charpin2009does,kyureghyan2011constructing} of $ \lambda $ with respect to $ S $ if $ \lambda(x +u \gamma ) = \lambda(x) +ub $ for all $ x \in \gf_q $ and $ u \in S $.

\begin{Lem}
\cite[Theorem 6.4]{akbary2011constructing}
\label{2-6.4}
Let $ S \subseteq \gf_q $ and $ \lambda : \gf_q \rightarrow S $ be a surjective map.
Let $ \gamma \in \gf_q^* $ be a $b$-linear translator with respect to $S$ for the map $\lambda$.
Then for any $ G \in \gf_q[x] $ which maps $ S $ into $ S $, we have that $ f(x)= x + \gamma G(\lambda(x)) $ is a permutation polynomial of $ \gf_q $ if and only if $ g(x)= x + bG(x) $ permutes $S$.
\end{Lem}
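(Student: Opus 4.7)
The plan is to derive this directly from the AGW criterion (Lemma~\ref{LGWlemma}) by taking $A = \gf_q$, $\bar S = S$, and $\bar\lambda = \lambda$. The key is to verify that the commutative square
\begin{equation*}
\xymatrix{
\gf_q \ar[rr]^{f}\ar[d]_{\lambda} &   &  \gf_q  \ar[d]^{\lambda} \\
S	 \ar[rr]^{g} &  & S }
\end{equation*}
actually holds, and that $f$ is injective on every fiber of $\lambda$; once these two ingredients are in place, the stated equivalence is exactly what the AGW criterion delivers.

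First I would verify commutativity. For every $x \in \gf_q$, the value $G(\lambda(x))$ lies in $S$ because $G$ maps $S$ into $S$ and $\lambda(x) \in S$. The $b$-linear translator property applied with $u = G(\lambda(x)) \in S$ then yields
\[
\lambda\bigl(f(x)\bigr) \;=\; \lambda\bigl(x + \gamma\, G(\lambda(x))\bigr) \;=\; \lambda(x) + b\, G(\lambda(x)) \;=\; g(\lambda(x)).
\]
So $\lambda \circ f = g \circ \lambda$. This step is where the $b$-linear translator hypothesis does all of the work, and I view it as the only nontrivial verification in the argument.

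Next I would check injectivity of $f$ on each fiber. Fix $s \in S$ and suppose $x_1, x_2 \in \lambda^{-1}(s)$. Then $G(\lambda(x_1)) = G(\lambda(x_2)) = G(s)$, so $f(x_i) = x_i + \gamma G(s)$ for $i = 1, 2$; hence $f(x_1) = f(x_2)$ forces $x_1 = x_2$. Thus $f$ restricted to $\lambda^{-1}(s)$ is just a translation by the fixed element $\gamma G(s)$ and is trivially injective.

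Finally, since $\lambda: \gf_q \to S$ is surjective by hypothesis and $\lambda = \bar\lambda$, Lemma~\ref{LGWlemma} applies. By that lemma, $f$ is a bijection of $\gf_q$ if and only if both $g$ is a bijection on $S$ and $f$ is injective on every $\lambda^{-1}(s)$. The latter condition is automatic by the previous paragraph, so the equivalence collapses to the statement that $f$ permutes $\gf_q$ if and only if $g(x) = x + bG(x)$ permutes $S$, which is exactly what we wanted to prove.
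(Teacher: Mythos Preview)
The paper does not actually prove this lemma; it is quoted directly from \cite{akbary2011constructing} and only accompanied by the commutative diagram you reproduced. Your argument is correct and is exactly the intended one: verify $\lambda\circ f = g\circ\lambda$ via the $b$-linear translator property, observe that $f$ acts as a fixed translation on each fiber $\lambda^{-1}(s)$ (hence is injective there), and invoke Lemma~\ref{LGWlemma}.
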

It can be illustrated by the following commutative diagram.
\begin{equation*}
\xymatrix{
\gf_{q} \ar[rr]^{f}\ar[d]_{\lambda} &   &  \gf_{q}  \ar[d]^{\lambda} \\
\lambda(\gf_{q})	 \ar[rr]^{g} &  & \lambda(\gf_{q}) .}
\end{equation*}


Below, we describe how to obtaining the compositional inverses of PPs $  f (x) =  x + \gamma G(\lambda(x)) $.
The fundamental idea of our approach is the following commutative diagram, where we design $\eta(x) = x-\lambda(x)$ such that $\phi(x) = (\lambda(x), x-\lambda(x))$ is bijective, see Theorem \ref{mainGGF}.
\begin{equation*}
\xymatrix{
\gf_{q} \ar[rr]^{f}\ar[d]_{\phi=(\lambda, \eta)} &   &  \gf_{q}  \ar[d]^{\phi=(\lambda, \eta)} \\
\phi(\gf_{q})	 \ar[rr]^{\psi=(g,\tau)} &  & \phi(\gf_{q}) .}
\end{equation*}

We obtain the compositional inverses in the following theorem:
\begin{Th}
\label{mainGGF}
Let $ f(x)= x + \gamma G(\lambda(x)) $ defined as in Lemma \ref{2-6.4} be a PP on $ \gf_q $ and $g^{-1}(x)$ be the compositional inverse of $g(x)=x+bG(x)$.
Then the compositional inverse of $ f (x) $ is given by
$$f^{-1}(x)=(b-\gamma)G\left(g^{-1}(\lambda(x))\right)+g^{-1}(\lambda(x))  - \lambda(x) +x .$$
\end{Th}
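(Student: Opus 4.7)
The plan is to follow the unified framework of Theorem \ref{main} exactly as in the proof of Theorem \ref{mainBk}, using the simple additive choice $\eta(x)=\overline{\eta}(x)=x-\lambda(x)$ so that the companion map $\phi(x)=(\lambda(x),\,x-\lambda(x))$ is a bijection from $\gf_q$ onto $\phi(\gf_q)$ with inverse $\phi^{-1}(y,z)=y+z$. Since $\lambda=\overline{\lambda}$ here, we take $\overline{\phi}=\phi$, so the diagram in Theorem \ref{main} becomes the commutative square drawn just before the theorem statement, and the whole task reduces to identifying $\psi=(g,\tau)$ and inverting it.

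First I would compute $\phi\circ f(x)$ explicitly. The key calculation uses the $b$-linear translator property: because $G$ maps $S$ into $S$ and $\lambda(x)\in S$, the element $u:=G(\lambda(x))$ lies in $S$, so
\[
\lambda(f(x))=\lambda\bigl(x+\gamma G(\lambda(x))\bigr)=\lambda(x)+bG(\lambda(x))=g(\lambda(x)).
\]
Consequently $f(x)-\lambda(f(x))=(x-\lambda(x))+(\gamma-b)G(\lambda(x))$. Substituting $y=\lambda(x)$ and $z=x-\lambda(x)$, I read off
\[
\psi(y,z)=\bigl(g(y),\, z+(\gamma-b)G(y)\bigr),
\]
so the second component $\tau$ is affine in $z$ once $y$ is known.

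Next I would invert $\psi$. Given $(\alpha,\beta)\in\phi(\gf_q)$, the first equation $g(y)=\alpha$ gives $y=g^{-1}(\alpha)$ directly, which is the point where Lemma \ref{2-6.4} is invoked to guarantee $g$ is bijective on $S=\lambda(\gf_q)$. Then the second equation yields $z=\beta+(b-\gamma)G(g^{-1}(\alpha))$, so
\[
\psi^{-1}(\alpha,\beta)=\Bigl(g^{-1}(\alpha),\; \beta+(b-\gamma)G\bigl(g^{-1}(\alpha)\bigr)\Bigr).
\]
Finally, applying Theorem \ref{main} with $\alpha=\lambda(x)$ and $\beta=x-\lambda(x)$ and composing with $\phi^{-1}$ (which simply sums the two coordinates) produces the claimed expression $f^{-1}(x)=(b-\gamma)G(g^{-1}(\lambda(x)))+g^{-1}(\lambda(x))-\lambda(x)+x$.

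I do not expect any serious obstacle here; the only subtlety is a careful application of the $b$-linear translator identity to rewrite $\lambda(f(x))$ as $g(\lambda(x))$, which is exactly the same step that makes Lemma \ref{2-6.4} work. Once that identity is in hand, everything is linear in the second coordinate and the inversion is routine. For completeness I would note that the answer can also be verified directly by plugging into $f^{-1}\circ f=I$, using $\lambda(f(x))=g(\lambda(x))$ to simplify $g^{-1}(\lambda(f(x)))=\lambda(x)$, though this direct check would be omitted in the interest of showcasing the general recipe.
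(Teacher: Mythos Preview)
Your proposal is correct and follows essentially the same approach as the paper's own proof: the same choice $\phi(x)=(\lambda(x),\,x-\lambda(x))$ with $\phi^{-1}(y,z)=y+z$, the same identification of $\psi(y,z)=(g(y),\,z+(\gamma-b)G(y))$, and the same inversion leading to the claimed formula via Theorem~\ref{main}. If anything, you are slightly more explicit than the paper in spelling out why $\lambda(f(x))=g(\lambda(x))$ via the $b$-linear translator property, and your closing remark about the direct verification $f^{-1}\circ f=I$ mirrors exactly the check the paper gives immediately after its proof.
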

\begin{proof}
Using the same notation and assumptions of Lemma \ref{2-6.4}, let $\phi$ be a map defined by $$\phi: \gf_{q^n} \to \phi(\gf_{q^n}),$$ $$x \mapsto \left( \lambda(x),  x-\lambda(x) \right) .$$
Then $\psi$ is a bijection and for $(y,z)\in \phi(\gf_{q^n}),\ \psi^{-1}(y,z)=y+z.$

Next, we determine the expression of $\psi$ {from the equation} ${\phi}\circ f(x)=\psi\circ \phi (x)$.
After direct computation, we have
\begin{equation}
\label{combi2c}
{\phi}\circ f(x)=\left( \lambda(x) + bG(\lambda(x)) , x-\lambda(x)+ (\gamma-b)G(\lambda(x)) \right).
\end{equation}
Substituting $ \lambda(x) $ and $ x-\lambda(x) $ in Eq. (\ref{combi2c}) with $ y $ and $ z $ respectively, we obtain
$$\psi(y,z): \phi(\gf_{q^n}) \to \phi(\gf_{q^n}),$$
$$(y,z) \mapsto\left( y + bG(y) , z+ (\gamma-b)G(y) \right).$$	

Since $f(x)=x + \gamma G(\lambda(x))$ is a PP on $\gf_{q}$, $\psi$ and $ g(x)= x + bG(x) $ are both bijective.
We assume that $ g^{-1} $ denotes the compositional inverse of $ g(x)= x + bG(x)  $ over $S$.
In the following, we compute the compositional inverse of $\psi$.
Let $(y,z), (\alpha,\beta)\in \phi(\gf_{q^n}),$ satisfy $ \psi(y,z)=(\alpha,\beta) $, i.e.,
\begin{equation*}
\left\{
\begin{aligned}
y + bG(y)&=& \alpha, \label{GGFy} \\
z+ (\gamma-b)G(y)&=& \beta.  \\
\end{aligned}
\right.
\end{equation*}
We have $$y=g^{-1}(\alpha).$$
Moreover,
\begin{equation*}
\label{GGFeq}
z=\beta + (b-\gamma)G\left(g^{-1}(\alpha)\right).
\end{equation*}
Hence, we obtain
$$\psi^{-1}\left(\alpha,\beta \right)=\left( g^{-1}(\alpha) ,    \beta + (b-\gamma)G\left(g^{-1}(\alpha)\right)  \right).$$

Finally, we compute the compositional inverse of $ f(x) $.
From Theorem \ref{main}, together with the compositional inverses of $\phi, \psi$ and $\alpha = \lambda(x)$, $\beta=x-\lambda(x)$, the compositional inverse of $ f(x) $ is
\begin{equation*}
\begin{aligned}
f^{-1}(x)&= {\phi}^{-1} \circ \psi^{-1} \circ {\phi}(x)     \\
&=   g^{-1}(\alpha) + \beta + (b-\gamma)G\left(g^{-1}(\alpha)\right)    \\
&=   (b-\gamma)G\left(g^{-1}(\lambda(x))\right)+g^{-1}(\lambda(x))  - \lambda(x) +x .
\end{aligned}
\end{equation*}
\end{proof}

Theorem \ref{mainGGF} can be verified drectly by $f^{-1}(f(x)) =x$ for any $x\in \gf_q$.
		Firstly, for any $x\in\gf_{q}$, we have
		\begin{eqnarray*}
			g^{-1}(\lambda(f(x))) &=& g^{-1}\left( \lambda (x+\gamma G(\lambda (x))) \right) \\
			&=& g^{-1}(\lambda(x)+bG(\lambda(x))) = g^{-1}(g(\lambda(x))) = \lambda(x),
		\end{eqnarray*}
		where the second equality is due to the fact that $ \gamma $ is a $b$-linear translator with respect to $S$ for the map $\lambda$.  Therefore we obtain
		\begin{eqnarray*}
			f^{-1}(f(x)) &=& (b-\gamma) G(\lambda(x)) + \lambda(x) -\lambda (x+\gamma G(\lambda (x))) +x + \gamma G(\lambda (x)) \\
			&=& (b-\gamma) G(\lambda(x)) + \lambda(x) - (\lambda(x)+bG(\lambda(x))) + x  + \gamma G(\lambda (x)) = x.
		\end{eqnarray*}

Note that \cite[Theorem 1.2]{tuxanidy2014inverses} requires its $ \psi $ to be additive.
However, we focus more on $ \lambda(x) $ such that it has a $b$-linear translator with respect to $ S $, especially the case when $ \lambda(x) $ is not {additive} and even $ \lambda(0) \ne 0 $.
Specifically, when $\lambda(x)= \tr_{q^{n}/q}(x) $ or $ G $ is a $ q $-polynomial, Theorem \ref{mainGGF} is consistent with \cite[Corollary 1.6]{tuxanidy2014inverses}.

A special case of Theorem \ref{mainGGF} with $ G(x) = x$ (see \cite[Corollary 6.5]{akbary2011constructing}) is the following corollary:
\begin{Cor}
\cite[Theorem 3]{kyureghyan2011constructing}
\label{mainGGFc}
If $ b \ne -1 $, then the compositional inverse of the permutation $ x + \gamma \lambda(x) $ on $ \gf_q $ is $ \frac{-\gamma}{b+1}\lambda(x) +x $.
\end{Cor}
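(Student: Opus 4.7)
The plan is to obtain this corollary as an immediate specialization of Theorem \ref{mainGGF} with the choice $G(x) = x$. First I would compute the auxiliary mapping $g(x) = x + bG(x) = (b+1)x$. Since $b \ne -1$, the scalar $b+1$ is a nonzero element of $\gf_q$, so $g$ is a bijection on $S$ (and on $\gf_q$), which also guarantees via Lemma \ref{2-6.4} that $f(x) = x + \gamma\lambda(x)$ is indeed a permutation. The compositional inverse of $g$ is then the linear map $g^{-1}(y) = \frac{y}{b+1}$.

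Next I would substitute $G(x) = x$ and this expression for $g^{-1}$ into the formula from Theorem \ref{mainGGF}, obtaining
\[
f^{-1}(x) = (b-\gamma)\cdot\frac{\lambda(x)}{b+1} + \frac{\lambda(x)}{b+1} - \lambda(x) + x.
\]
Collecting the coefficient of $\lambda(x)$ gives $\frac{(b-\gamma)+1-(b+1)}{b+1} = \frac{-\gamma}{b+1}$, which yields the desired expression $f^{-1}(x) = \frac{-\gamma}{b+1}\lambda(x) + x$.

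As a sanity check independent of Theorem \ref{mainGGF}, I would verify directly that this candidate composes with $f$ to the identity. Using the $b$-linear translator property with $u = \lambda(x) \in S$ gives $\lambda(x + \gamma\lambda(x)) = \lambda(x) + b\lambda(x) = (b+1)\lambda(x)$, so that
\[
f^{-1}(f(x)) = \frac{-\gamma}{b+1}(b+1)\lambda(x) + x + \gamma\lambda(x) = x.
\]
There is no substantive obstacle here: the only subtlety is the algebraic simplification of the coefficient of $\lambda(x)$, and the assumption $b \ne -1$ is precisely what is needed to both invoke Theorem \ref{mainGGF} and to make the denominator $b+1$ meaningful in $\gf_q$.
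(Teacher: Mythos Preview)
Your proposal is correct and follows exactly the approach indicated by the paper, which simply states that this corollary is the special case $G(x)=x$ of Theorem~\ref{mainGGF} without spelling out the computation. Your additional direct verification via the $b$-linear translator property is a nice sanity check but is not required.
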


Furthermore, we propose a criterion for PPs of the form $x + \gamma G(\lambda(x)) $ being involutions,  and give some involutory constructions.
\begin{Cor}
\label{cominvolution2}
{
	Suppose $ \gamma \in \gf_q^* $.
	Then the} PP $f(x) = x + \gamma G(\lambda(x))  $ over $ \gf_{q} $ defined as in Lemma \ref{2-6.4} is an involution if and only if
{
\begin{enumerate}[(1)]
\item $ bG(y) + bG(y + bG(y))=0 $, for $ y \in \lambda(\gf_{q}) $ and   \label{cond1}
\item {anyone of the following holds:
\begin{enumerate}
\item[(i)] $b\neq 0$;
\item[(ii)] $q$ is even;
\item[(iii)]  $ G(\lambda(x)) = 0$ when $q$ is odd and $b = 0$.
\end{enumerate}}  \label{cond2}
\end{enumerate}
}
\end{Cor}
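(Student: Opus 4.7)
The plan is to combine Lemma~\ref{maininvolution} (applied to the commutative diagram preceding Lemma~\ref{2-6.4}) with the explicit formula for $f^{-1}$ given in Theorem~\ref{mainGGF}, and then to reduce $f^{-1}(x) = f(x)$ to the two stated conditions by a short case analysis on $b$ and the characteristic of $\gf_q$.

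First I would invoke Lemma~\ref{maininvolution}: if $f$ is an involution, then $g(x) = x + bG(x)$ must be an involution on $\lambda(\gf_q)$. Writing out $g(g(y)) = y$ for $y \in \lambda(\gf_q)$ yields exactly
$$ bG(y) + bG(y + bG(y)) = 0, $$
which is condition (1). In particular, under (1) the inverse map $g^{-1}$ agrees with $g$ on $\lambda(\gf_q)$, so I may replace $g^{-1}(\lambda(x))$ by $\lambda(x) + bG(\lambda(x))$ in the formula from Theorem~\ref{mainGGF}.

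Next I would substitute this into $f^{-1}(x) = f(x)$, where
$$f^{-1}(x) = (b-\gamma)G\bigl(g^{-1}(\lambda(x))\bigr) + g^{-1}(\lambda(x)) - \lambda(x) + x.$$
After cancelling $x$ and using $g^{-1}(\lambda(x)) - \lambda(x) = bG(\lambda(x))$, the equation reduces to
$$ (b-\gamma)G\bigl(\lambda(x) + bG(\lambda(x))\bigr) + bG(\lambda(x)) = \gamma G(\lambda(x)), $$
which rearranges to
$$ (b-\gamma)\bigl(G(y) + G(y + bG(y))\bigr) = 0, \qquad y = \lambda(x). $$
This is the key identity that drives the remainder of the argument, and getting this simplification right will be the main computational obstacle.

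Finally I would carry out the case split to match condition (2). If $b \neq 0$, then condition (1) already forces $G(y) + G(y+bG(y)) = 0$, so the displayed identity is automatic; thus (1) alone characterises the involutory property in case (i). If $b = 0$, condition (1) is vacuous, $g$ is the identity, and the identity becomes $-2\gamma G(y) = 0$. When $q$ is even this vanishes automatically, giving case (ii), while for $q$ odd together with $\gamma \in \gf_q^*$ it forces $G(\lambda(x)) = 0$, which is precisely case (iii). Conversely, in each of the three subcases of (2), together with (1), one reverses the chain of equalities above to verify $f^{-1}(x) = f(x)$ for all $x \in \gf_q$, completing the equivalence.
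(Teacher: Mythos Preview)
Your proposal is correct and follows essentially the same route as the paper: invoke Lemma~\ref{maininvolution} to obtain condition~(1), substitute $g^{-1}=g$ into the formula from Theorem~\ref{mainGGF}, and then do the case split on $b$ and the characteristic. The only cosmetic difference is that the paper organizes the computation around $\varphi(x)=\gamma\bigl(G(y)+G(y+bG(y))\bigr)$ (effectively $f(f(x))-x$) rather than your $(b-\gamma)\bigl(G(y)+G(y+bG(y))\bigr)$, but the ensuing case analysis and conclusions are identical.
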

\begin{proof}
{Assume $f$ is an involution.  By Lemma \ref{maininvolution}}, we have $ bG(y) + bG(y + bG(y))=0 $ for $ y \in \lambda(\gf_{q}) $, where $  g(x)= x + bG(x) $ is defined as in Lemma \ref{2-6.4}.
{From now on, we assume Condition (1) holds.  Then, it suffices to prove that $f$ is an involution if and only if $\varphi(x)= 0 $.}
{Let $ \varphi(x) =  \gamma G(\lambda(x)) +\gamma G\left(  \lambda(x) + bG(\lambda(x))   \right)  $.}
Plugging $ bG(y) + bG(y + bG(y))=0 $ for $ y \in \lambda(\gf_{q}) $, $ g^{-1}(x)= g(x) $ and $ f^{-1}(x)$ by Theorem {\ref{mainGGF}} into $ \varphi(x) $, we have
$ f(x) $ is an involution if and only if $\varphi(x)=0$ holds for any $ x \in \gf_q $.
{
If $b\neq 0$, then Condition (\ref{cond1}) implies that $G(\lambda(x))=-G(\lambda(x)+b G(\lambda(x))) $
and thus $\varphi(x)= 0$ for any $ x \in \gf_q $.  If $b=0$ and $q$ is even,  then  $\varphi(x)=2 \gamma G(\lambda(x))=0$ as well.
If  $b=0$ and $q$ is odd, then
$$\varphi(x)=\gamma G(\lambda(x))+\gamma G(\lambda(x)+b G(\lambda(x)))=2 \gamma G(\lambda(x)).$$
In this case,   $\varphi(x) =0$ if and only if $G(\lambda(x)) =0$.

}
\end{proof}

When we consider $0$-linear translator in a finite field of even characteristic, the following explicit involution can be obtained easily.

\begin{Cor}
	\label{cominvolution3}
Let $ q $ be a power of $ 2 $.
Assume $ S \subseteq \gf_q $ and $ \lambda : \gf_q \rightarrow S $ is a surjective map.
Let $ \gamma \in \gf_q $ be a $0$-linear translator with respect to $S$ for the map $\lambda$.
Then for any $ G \in \gf_q[x] $ which maps $ S $ into $ S $, we have that $ f(x)= x + \gamma G(\lambda(x)) $ is an involution on $ \gf_q $.
\end{Cor}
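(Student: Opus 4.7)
The plan is to apply Corollary \ref{cominvolution2} directly with $b=0$, using the fact that a $0$-linear translator $\gamma$ makes the auxiliary map $g$ in Lemma \ref{2-6.4} trivial.

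First, I would observe that since $\gamma$ is a $0$-linear translator of $\lambda$ with respect to $S$, we have $b=0$. Consequently, the reduced map of Lemma \ref{2-6.4} is $g(x)=x+bG(x)=x$, which certainly permutes $S$. Hence by Lemma \ref{2-6.4}, $f(x)=x+\gamma G(\lambda(x))$ is a permutation polynomial of $\gf_q$, so it makes sense to ask whether it is an involution.

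Next, I would verify that the two conditions of Corollary \ref{cominvolution2} are satisfied. Condition (1) asks that $bG(y)+bG(y+bG(y))=0$ for every $y\in\lambda(\gf_q)$; when $b=0$, this reduces to $0+0=0$, which is trivially true. For Condition (2), we need one of (i), (ii), (iii) to hold. Although (i) fails because $b=0$, the hypothesis that $q$ is a power of $2$ makes $q$ even, so case (ii) applies.

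Since both conditions of Corollary \ref{cominvolution2} are satisfied, the conclusion that $f(x)=x+\gamma G(\lambda(x))$ is an involution on $\gf_q$ follows immediately. There is essentially no obstacle here: the corollary is a clean specialization of Corollary \ref{cominvolution2}, where characteristic $2$ together with $b=0$ makes the obstruction $\varphi(x)=2\gamma G(\lambda(x))$ vanish automatically. A one-line sanity check, $f(f(x))=f(x)+\gamma G(\lambda(f(x)))=x+\gamma G(\lambda(x))+\gamma G(\lambda(x))=x$, where $\lambda(f(x))=\lambda(x)+G(\lambda(x))\cdot b=\lambda(x)$ by the $0$-linear translator property and the final cancellation is in characteristic $2$, could also be included to make the argument transparent without invoking Corollary \ref{cominvolution2}.
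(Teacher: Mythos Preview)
Your proposal is correct and matches the paper's intended approach: the paper does not give a separate proof of this corollary but presents it as an immediate consequence of Corollary~\ref{cominvolution2} with $b=0$ in even characteristic, exactly as you argue. Your added direct verification $f(f(x))=x$ is a welcome sanity check; the only cosmetic point is that Corollary~\ref{cominvolution2} is stated for $\gamma\in\gf_q^*$, whereas the present corollary allows $\gamma\in\gf_q$, but the case $\gamma=0$ gives $f(x)=x$ and is trivial.
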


{
We provide a specific construction as an example of Corollary \ref{cominvolution3}.
\begin{example}
Let  $ q $ be a power of $ 2 $ and $n>2$ be any integer.
Let $S=\gf_q$ and $ \lambda(x) = \sum\limits_{1 \le i < j \le n}{  \beta_i( x^{q^i} + x^{q^j}   )   }  \in \fqn[x] $. 
Clearly, each $ \gamma \in \gf_{q} $ is a $0$-linear translator with respect to $\gf_q$ for the map $\lambda(x)$.
Then for any $ G \in \gf_{q^n}[x] $ which maps $ \gf_q $ into $ \gf_q $, we have that $ f(x)= x + \gamma G  \left( \sum\limits_{1 \le i < j \le n}{  \beta_i( x^{q^i} +x^{q^j}   )   }   \right)  $ is an involution on $ \gf_{q^n} $.
\end{example}
}

{
\section*{Acknowledgement}
We thank  the editor  Sudhir R Ghorpade  and anonymous referees for their helpful suggestions.
}

%


\end{document}